\newtheorem{lemma}{Lemma}
\newtheorem{proposition}{Proposition}
\newtheorem{conjecture}{Conjecture}
\newtheorem{remark}{Remark}
\newenvironment{proof}{{\indent\emph{Proof:}}}{$\hfill\Box$\vspace{.1in}}
\begin{document}
\title{Optimal Relaying in \\Energy Harvesting Wireless Networks with Wireless-Powered Relays}

\author{Masoumeh~Moradian,~\IEEEmembership{Student Member,~IEEE},
        Farid~Ashtiani,~\IEEEmembership{Member,~IEEE}, and Ying Jun (Angela) Zhang,~\IEEEmembership{Senior Member,~IEEE}       
        
\thanks{M. Moradian and F. Ashtinai are with the Advanced Communications Research Institute (ACRI), Department of Electrical Engineering, Sharif University of Technology, Tehran, Iran (email: mmoradian@ee.sharif.edu, ashtianimt@sharif.edu).

This work has been initiated when the first author was a visiting student at the Department of Information Engineering, the Chinese University of Hong Kong. 

Y. J. Zhang is with the Department of Information Engineering, the Chinese University of Hong Kong. She is also with the Institute of Network Coding (Shenzhen), Shenzhen Research Institute, the Chinese University of Hong Kong (e-mail: yjzhang@ie.cuhk.edu.hk).   }
}




\maketitle
\thispagestyle{empty}
\begin{abstract}
In this paper, we consider a wireless cooperative network with an energy harvesting relay which is powered by the energy harvested from ambient RF waves, such as that of a data packet. At any given time, the relay operates either in the energy harvesting (EH) mode or the data decoding (DD) mode, but not both. Separate energy and data buffers are kept at the relay to store the harvested energy and decoded data packets, respectively. In this paper, we optimize a time switching policy that switches between the EH mode and DD mode to maximize the system throughput or minimize the average transmission delay. Both static and dynamic time switching policies are derived. In particular, static policies are the ones where EH or DD mode is selected with a pre-determined probability. In contrast, in a dynamic policy, the mode is selected dynamically according to the states of data and energy buffers. We prove that the throughput-optimal static and dynamic policies keep the relay data buffer at the boundary of stability. More specifically, we show that the throughput-optimal dynamic policy has a threshold-based structure. Moreover, we prove that the delay-optimal dynamic policy also has a threshold-based structure and keeps at most one packet at the relay. We notice that the delay-optimal and throughput-optimal dynamic policies coincide in most cases. However, it is not true for optimal static policies. Finally, through extensive numerical results, we show the efficiency of optimal dynamic policies compared with the static ones in different conditions. 

\end{abstract}

\begin{IEEEkeywords}
Energy harvesting, relaying, quasi-birth-death (QBD) process, threshold-based structure, delay, throughput, cooperative network.
\end{IEEEkeywords}

\section{Introduction} \label{sec:intro}
\subsection{Background and contributions}
\IEEEPARstart{E}{nergy} harvesting (EH), a technology to collect energy from the surrounding environment, has received considerable attention as a sustainable solution to prolong the lifetime of wireless networks under energy constraints. EH technology is especially appealing in networks with low-cost battery-powered devices, e.g., wireless sensor networks \cite{dl:sude}. Unlike battery-powered networks, energy-harvesting wireless networks potentially have an unlimited lifetime, thanks to the unlimited energy supply in the environment. Thus, with relatively low cost, a set of EH relay nodes can be overlaid on an existing non-EH network to enhance the performance and quality of service (QoS). 

In addition to conventional harvestable energy sources such as solar, wind, and thermal energy, ambient RF waves have emerged as a new source of energy for wireless EH (WEH) nodes. Since radio waves carry both information and energy, simultaneous wireless information and power transfer (SWIPT) has been actively studied in \cite{dl:varshney,dl:grover}, which assume that a WEH node can decode information and harvest energy simultaneously from the same radio signal. However, due to practical hardware and circuit limitation, a radio signal that has been used for energy harvesting may not be reused for data decoding \cite{dl:huang}. Correspondingly, time-switching (TS) protocols, which switch between EH and data decoding (DD) modes, have been studied in single-antenna point-to-point channels \cite{dl:liu1}, in two-user MIMO interference channels \cite{dl:park}, and in amplify-and-forward \cite{dl:nasir1,dl:nasir2} and decode-and-forward \cite{dl:gu,dl:wang} relaying networks.  

In this paper, we focus on a cooperative wireless network with a WEH relay that uses a TS protocol to switch between the EH and DD modes. In the DD mode, the relay decodes the source packets that are not successfully transmitted to the destination and stores them in its data buffer. In the EH mode, the relay harvests energy from the ambient RF waves, e.g., incoming source packets or co-channel interference \cite{dl:liu1} to replenish its energy buffer. For such a system, we aim to find the optimal TS policies that maximize the system throughput or minimize the average end-to-end transmission delay. A key challenge of this work lies in the inherent correlation between the arrival and departure processes of the data and energy buffers at the WEH node. 
The correlation between the departure processes is apparent in that a data packet departure must be accompanied by a simultaneous departure of energy packets, due to the energy consumption for data transmission. In addition, the correlation in the arrival processes is due to the TS policy, where packets arrive at either the data buffer or energy buffer depending on whether a node is in the DD mode or EH mode. Note that an imbalance in the backlog of data and energy buffers would result in a degradation of system performance, because cooperative relaying is possible only when both buffers are backlogged. The intrinsic correlation between the data and energy buffers complicates the analysis of the system. To tackle this challenge, we propose a quasi-birth-death (QBD) process to analyze the throughput and delay as a function of the TS policy. Optimal static and dynamic TS policies are then derived for throughput maximization and delay minimization, respectively. 

The main contributions in this paper are summarized as follow:
\begin{itemize}
\item We derive optimal static TS policies, in which modes are selected based on a pre-determined probability regardless of the states of the buffers. We prove that the throughput-optimal static policy is the one that keeps the data buffer at the WEH relay at the boundary of stability. Furthermore, the delay-optimal static policy is obtained by analyzing an underlying QBD process. We also derive the necessary and sufficient condition under which the non-cooperation policy is delay-optimal.
 
\item We derive dynamic TS policies, where modes are selected dynamically based on the states of the data and energy buffers. We prove that both the throughput-optimal and delay-optimal policies are threshold-based in terms of the status of the energy buffer. Moreover, the delay-optimal dynamic policy is the one that keeps at most one packet in the data buffer. Interestingly, we observe that the throughput-optimal and delay-optimal dynamic TS policies are the same in most cases. However, this is not the case for static TS policies.
 
\item By simulations, we validate our analyses and compare the performance of the optimal static and dynamic policies in different conditions. 
\end{itemize}

For the convenience of readers, the throughput-optimal and delay-optimal static and dynamic policies are summarized in Table \ref{tab:table2} in Section \ref{Delay-optimal dynamic policy}. 

\subsection{Related works}
To the best of our knowledge, our work is the first one that explores the TS policy in a WEH relay with both data and energy buffers. In \cite{dl:nasir1,dl:nasir2,dl:gu,dl:wang,dl:nasir3} the maximum achievable throughput of the source is derived in WEH cooperative networks that apply TS policy at the relay node. However, none of them include a data buffer at the relay node. 
In fact, the information which is decoded by the relay in a slot, is transmitted afterwards in the same slot using the energy harvested from source transmissions. However, the energy harvested in a slot may not be enough for transmission of the packet. By including a data buffer at the relay, the packet can be stored until enough energy is harvested.
In \cite{dl:nasir1}, the required energy for transmission in a slot is harvested in a variable portion of the same slot, i.e., continuous EH. The same authors investigated the discrete EH in \cite{dl:nasir2} where the required energy for transmission in a slot is harvested in previous slots. They also compared these two schemes in \cite{dl:nasir3}. In \cite{dl:gu}, the relay harvests energy from interference signals as well as source transmissions in a continuous EH manner. The authors in \cite{dl:wang} considered discrete EH in a network with multiple relays where the relays cooperate to transmit the information of the source.  

The delay performance of cooperative networks with EH relays has been considered in \cite{dl:mine,dl:rajib,dl:tang}. In \cite{dl:mine,dl:rajib}, the average transmission delay in a wireless network with cooperative EH relay(s) is minimized. The relay, however, harvests energy from the environmental resources instead of RF waves, and thus the arrival processes of data and energy buffers are not correlated. 
The authors in \cite{dl:tang} minimize the transmission completion time of a fixed amount of data backlogged at the source in a cooperative network with multiple WEH relays. Since the data backlogged at the source is limited, the optimization includes the order of transmissions and their corresponding durations.
However in our scenario, the source is always backlogged, i.e., there are always unsuccessfully transmitted packets that can be stored and transmitted by the relay. Thus, unlike \cite{dl:tang}, the data buffer at node R may become unstable due to applying an inappropriate TS policy. In this respect, we optimize the long-term performance metrics, e.g., the average transmission time of the packets while preserving the stability of the data buffer at the relay node.     

The rest of the paper is organized as in the following. In Section II, we introduce the system model, the underlying QBD process, and the desired performance metrics. Section III is dedicated to the derivation of optimal static policies. Section IV discusses how we find the optimal dynamic policies. In Section V, the numerical results are presented. Finally, Section VI concludes the paper. Also, the notations are introduced in Table \ref{tab:table1}.
\begin{table}[!t]
\begin{center}
\caption{Parameters}
\scalebox{0.75}{
    \begin{tabular}{ | c | c |}
    \hline
    Parameter                &  Explanation                               \\ \hline
    $Q_d$                    &  Data buffer at node R                   \\ \hline
    $Q_e$                   &  Energy buffer at node R       \\ \hline
    $p^{det}_S$ ($p^{det}_R$)&  Detection probability of S-D (R-D) link \\ \hline
    $\gamma_S$ ($\gamma_R$)  &  Energy arrival distribution at node S (node R)  \\ \hline 
    $N$ & Size of the energy buffer, $\text{Q}_\text{e}$ \\ \hline
    $K$ & The number of energy units used for a packet transmission \\ \hline
    $b_{\text{max}}$ & The maximum number of energy units harvested in a slot \\ \hline
    $\Gamma$ & The random variable denoting the number of energy units \\ 
    & harvested in a slot in EH mode \\ \hline 
    $\gamma_i$ & The probability of harvesting $i$ energy units in a slot \\ \hline
    $\tau$                   &  Average transmission delay of source packets      \\ \hline
    $q_d$& The number of data packets backlogged at node $R$ \\ \hline
    $q_e$& The number of energy units backlogged at node $R$ \\ \hline
    $\overline{D_R}$         &  Average system delay at node R  \\ \hline
    $\lambda_{id}$ & Data arrival rate at node R \\ \hline
    $\lambda_{od}$ & Data departure rate at node R \\ \hline
    $\lambda_{ie}$ & Energy arrival rate at node R after blocking \\ \hline
    $\lambda_{oe}$ & Energy departure rate at node R \\ \hline
    $\alpha_s$ & The probability of switching to the DD mode at state $s$ \\ \hline
    $p^a_R$ & The probability that node $R$ is active \\ \hline
    $p_b$ & The blocking probability of energy buffer, $\text{Q}_\text{e}$ \\  \hline
    $\alpha^{T}$ & Throughput-optimal data decoding probability in static policy \\ \hline 
    $\alpha^{D}$ & Delay-optimal data decoding probability in static policy \\ \hline
    $\alpha^{T}_{s}$ & Throughput-optimal dynamic policy \\ \hline
    $\alpha^{D}_{s}$ & Delay-optimal dynamic policy \\ \hline
    \end{tabular}}
    \label{tab:table1}
\end{center} 
\end{table} 

\section{System Model}
\label{sec:Network scenario}
\subsection{Network scenario}
\label{subsec:system setup}
We consider a three-node cooperative wireless network comprised of a source (S), a relay (R), and a destination (D) node, as shown in Fig. \ref{fig:scenario}(a). All nodes are within the transmission range of each other. Suppose that nodes S and D are battery-powered and node S is infinitely backlogged, i.e., it has always a packet to transmit. On the other hand, node R is an energy harvesting (EH) node, which harvests its energy from ambient RF waves, e.g., source transmissions, an RF energy source or interference signals and is able to store the harvested energy in a rechargeable battery. Suppose that node R is equipped with one antenna and two different circuits, one for EH and the other for DD. Thus, node R can only transmit data packets, receive data packets, or harvest energy at a particular time. Node R may relay the data packets of the source, but it does not have its own traffic. The data and energy buffers at node R are denoted by $\text{Q}_\text{d}$ and $\text{Q}_\text{e}$, respectively. While the data buffer is assumed to have infinite size, we assume that the energy buffer size is finite due to the finite battery capacity in practice. The assumption of infinite data buffer is made to analyze the stability of the data buffer. However, it can be considered sufficiently large in practice.  

\begin{figure}[!t]
\begin{center}
\includegraphics[height=130pt,width=\columnwidth]{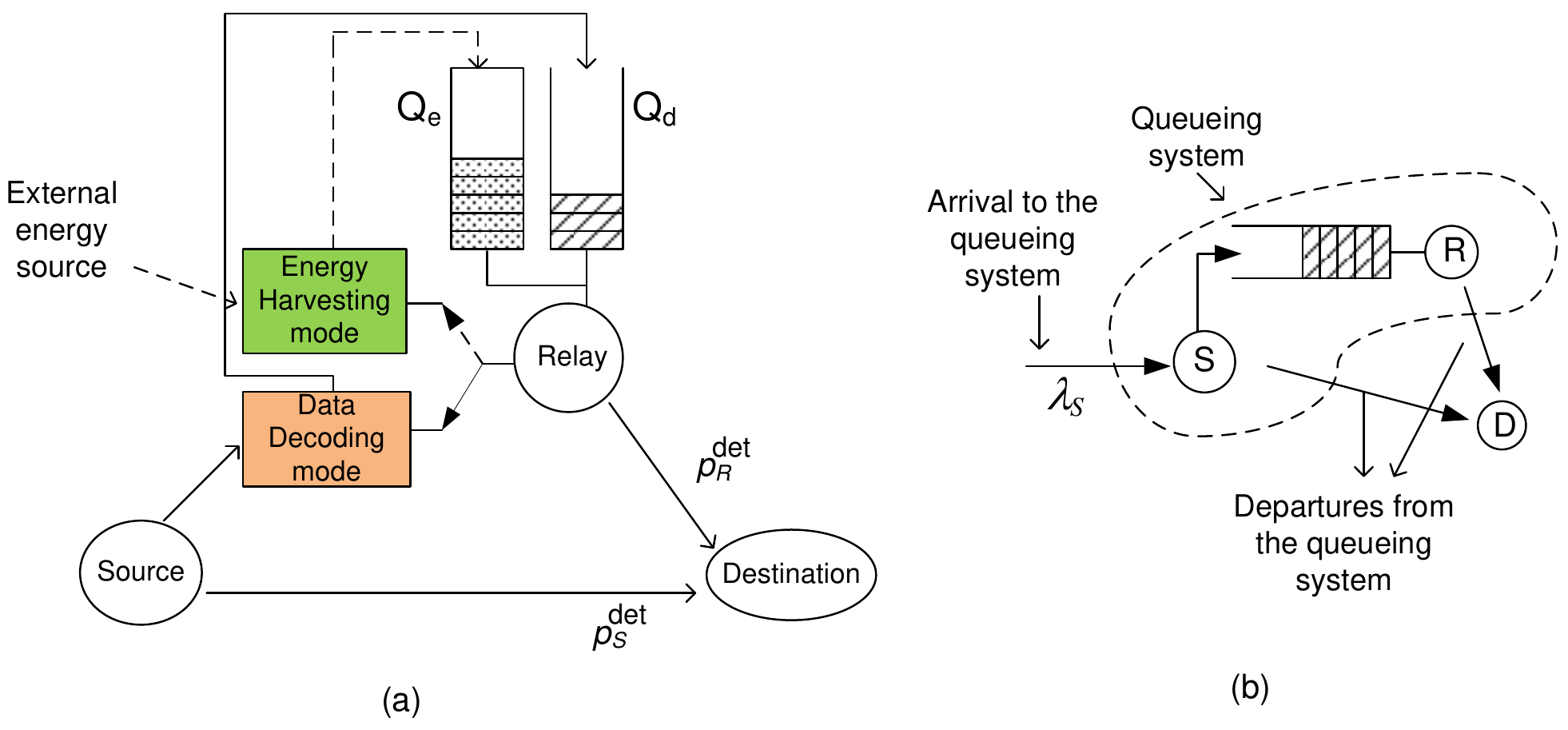}
\caption{(a) Network scenario, (b) Equivalent queueing system.}
\label{fig:scenario}
\end{center}
\end{figure}

We assume that the channel between nodes S and R (i.e., S-R channel) is ideal. In practice, this means that node R is sufficiently close to node S, so that it is able to harvest noticeable energy from the source transmissions. However, the channels between S and D (i.e., S-D channel) and between R and D (i.e., R-D channel) are non-ideal with detection probabilities $p^{det}_{S}$ and $p^{det}_{R}$, respectively. Moreover, time is considered to be slotted, and each time slot is further divided into two equal subslots. Without loss of generality, in our scenario, a time slot is considered as the time unit. The transmission strategies of nodes S, R and D are described as follows.

\emph{Node S}: Node S transmits a data packet in the first subslot and remains silent in the second subslot of each time slot. If it receives an ACK from either node R or node D, then the current packet is removed from its buffer, and it will transmit a new packet in the next time slot. Otherwise, the current packet will be retransmitted in the next time slot. 

\emph{Node R}: At the beginning of each slot, node R decides whether to operate in the DD mode or EH mode. When operating in the DD mode, node R decodes the packet of node S in the first subslot. At the end of the first subslot, node R discards the received packet if an ACK is received from node D. Otherwise, it stores the packet in its data buffer, $\text{Q}_\text{d}$, and sends an ACK to node S. On the other hand, if node R is in the EH mode, it harvests energy from the transmission of node S in the first subslot. In either mode, node R transmits in the second subslot when it has at least one data packet in its data buffer and enough energy units backlogged in its energy buffer. In this case, we say that node R is in the active state.

Suppose that node R consumes $K$ units of energy to transmit a data packet. The data packets are transmitted on a first-come first-serve (FCFS) basis. If a packet is not transmitted successfully in a slot, it will remain at the head-of-line of the data buffer and be retransmitted in a following time slot whenever there exists at least $K$ energy units in the energy buffer.

\begin{figure}[!t]
\begin{center}
\includegraphics[height=130pt,width=\columnwidth]{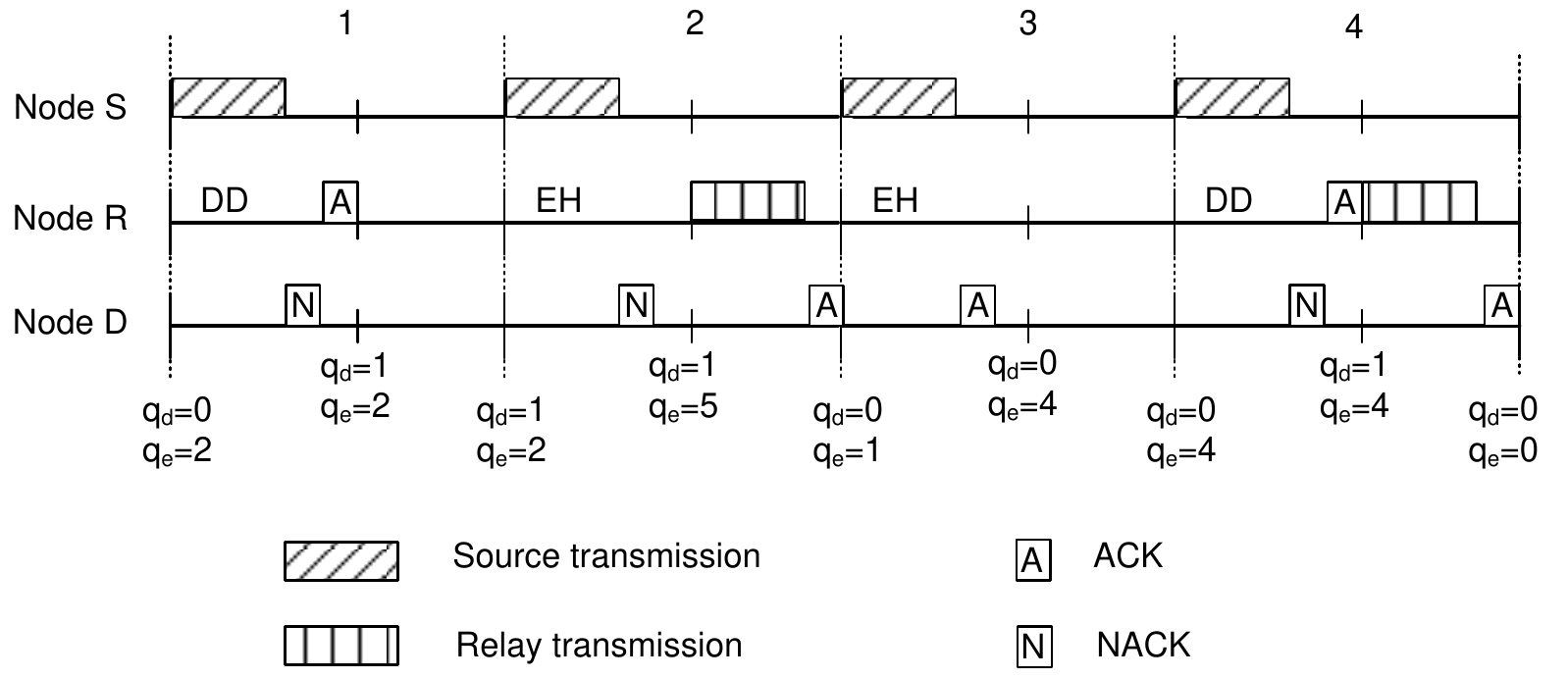}
\caption{Transmission protocol ( $b_{\max}=3$, $K=4$)}
\label{fig:protocol}
\end{center}
\end{figure}

\emph{Node D}: Upon receiving a data packet successfully, it broadcasts an ACK message. 

A realization of the transmission strategies of the nodes is illustrated in Fig. \ref{fig:protocol}. Suppose that there are initially $2$ energy units stored in $\text{Q}_\text{e}$. As shown in Fig. \ref{fig:protocol}, in the first slot, node R is in the DD mode. Upon hearing NACK in the first subslot, it stores the source packet in $\text{Q}_\text{d}$ and transmits an ACK. However, due to lack of energy, it cannot transmit the packet in the second subslot. In the second slot, node R switches to the EH mode, harvests enough energy in the first subslot and transmits the backlogged source packet successfully in the second subslot. In the third slot, node R remains in the EH mode to harvest more energy. Finally, it switches to the DD mode in the forth slot, receives a source packet and transmits it successfully in the same slot.

Let $\Gamma$ be the random variable denoting the number of energy units harvested in a slot when node R is in the EH mode. Suppose that $\Gamma$ is independently and identically distributed over different slots. Define $\gamma_m=Pr\{\Gamma=m\}$ for $ 0 \leq m \leq b_{\text{max}}$, where $b_{\text{max}}$ is the maximum number of energy units harvested in a slot. 
The stochastic nature of the harvested energy is due to the randomness in wireless channels and EH circuitry. Moreover, the energy harvested from RF waves at each slot is generally low compared to the amount of energy required for transmission of a packet \cite{dl:lu}. This is because the RF waves are severely attenuated due to path loss and fading. Thus, it is reasonable to assume that $b_{\text{max}}\leq K$. In addition, we assume that the size of the energy buffer, i.e., $N$, is large enough to store the energy needed for at least two transmissions, i.e., $N\geq 2K$.

\begin{figure}[!t]
\begin{center}
\includegraphics[height=150pt]{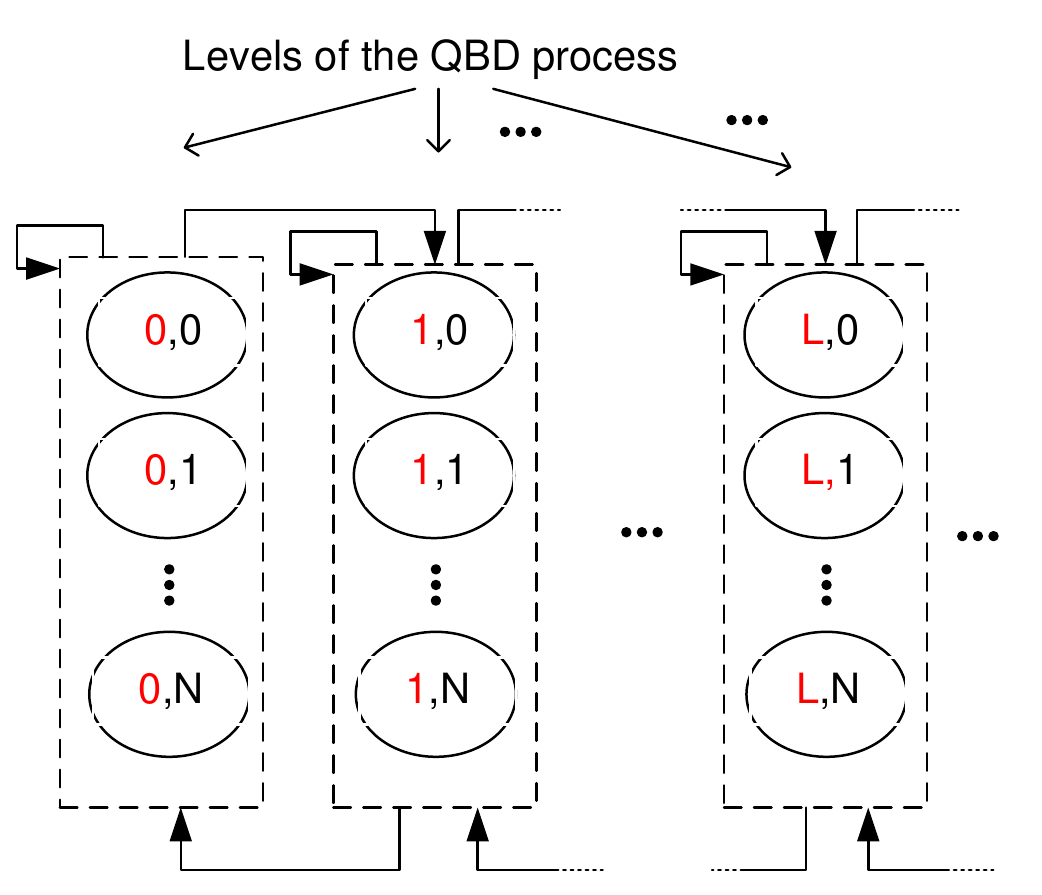}
\caption{The Underlying QBD model.}
\label{fig:qbd}
\end{center}
\end{figure}

\subsection{Mode selection and the underlying QBD process}
\label{Mode selection and underlying QBD process}
Define the system state at the beginning of each time slot by an ordered-pair $(q_d,q_e)$, where the first and the second components denote the number of data packets and energy units backlogged at node R, respectively. At each slot, based on the current state, node R switches to either DD or EH mode. The system state evolves according to a Markovian process and can be modeled by a QBD process shown in Fig. \ref{fig:qbd}. According to the convention, the first entry of system state denotes the level of the QBD and the second entry denotes its phase \cite{dl:lat}. There are an infinite number of levels and a finite number of phases in the QBD, corresponding to our assumption on infinite data buffer size and finite energy buffer size. Moreover, transitions occur between the states only within the same or adjacent levels. This corresponds to our assumption that at most one data packet arrives at or departs from $\text{Q}_\text{d}$ at each slot.

The transition probability from a state in the first level of the QBD, i.e., $(0,q_e)$, to other states, denoted by $P_{0i\rightarrow lj}$, is derived as (please refer to Appendix \ref{ap:qbd} for more details)
\begin{equation}
P_{0i\rightarrow lj}=\begin{cases}
\alpha_s(1-p^{det}_{S})p^{det}_{R} \mathbf{M}_{ij}+\alpha_s p^{det}_{S} \mathbf{I}_{ij} \\
\quad +(1-\alpha_s)\mathbf{T}_{ij} ~\qquad \qquad \qquad;\qquad \quad  l=0~, \\
\alpha_s(1-p^{det}_{S})(1-p^{det}_{R}) \mathbf{M}_{ij}\qquad; \quad\qquad ~  l=1~, \\
\end{cases}
\label{eqtp1}
\end{equation}
where $\alpha_s$ denotes the probability of switching to the DD mode at state $s$. $\mathbf{M}$ is derived in \eqref{eqM2} and represents the transition matrix of the energy state in the second subslot, given that the data buffer at node R is backlogged at the beginning of the second subslot. Likewise, the energy state in the first subslot changes according to the transition matrix $\mathbf{T}$ in \eqref{eqT1}, given that node R selects EH mode. Furthermore, the transition probability from state $s=(l,i)$ to state $(l',j)$ ($l>0$), represented by $P_{li\rightarrow l'j}$, is written as 
\begin{equation}
P_{li\rightarrow l'j}=\begin{cases}
\alpha_s((1-p^{det}_{S})p^{det}_{R}+p^{det}_{S}(1-p^{det}_{R}))\mathbf{M}_{ij}\\
+(1-\alpha_s)(1-p^{det}_{R})\mathbf{B}_{ij}\qquad\qquad~ ;~~ l'=l~, \\
\alpha_s(1-p^{det}_{S})(1-p^{det}_{R})\mathbf{M}_{ij} \qquad~~~ ;  ~  ~l'=l+1, \\
\alpha_s p^{det}_{S}p^{det}_{R}\mathbf{M}_{ij}+(1-\alpha_s)p^{det}_{R}\mathbf{B}_{ij} ~;  ~ l'=l-1. 
\end{cases}
\label{eqtp2}
\end{equation}
where $\mathbf{B}$ is the transition matrix of the energy buffer state in a slot, i.e., in the first and second subslots, given that the data buffer at node R is backlogged at the beginning of the first subslot and node R selects EH mode. In particular, $\mathbf{B}=\mathbf{T} \times \mathbf{M}$. It is worth noting that including the effect of non-ideality of S-R channel is straightforward by adding a detection probability in the related transition probabilities.

\subsection{Throughput and average transmission delay}
In this subsection, we define two performance metrics, namely, source throughput and average transmission delay. In particular, source throughput, denoted by $\lambda_S$, is defined as the rate at which the packets are successfully detected at node D. By the assumption that node S is infinitely backlogged, $\lambda_S$ can be derived as
\begin{equation}
\lambda_S = p^{det}_S+p^a_R p^{det}_R~,
\label{eq:thr}
\end{equation}
where $p^a_R$ is the probability that node R is active, i.e., it has at least one data packet and $K$ energy units backlogged at the beginning of the second subslot. 
Likewise, the average transmission delay is defined as the average time from the moment when a packet becomes HoL at node S until it is successfully detected at node D. Regarding nodes S and R as a queueing system shown in Fig. 1(b), we can derive the average system delay, $\tau$, according to the Little's Law [23] as 
\begin{equation}
\tau = \frac{\overline{q_{d}}+1}{\lambda_{S}}~,
\label{eq:delay}
\end{equation}
where $\overline{q_{d}}$ is the average number of data packets at node R. Here, the average number of packets in the whole system is equal to $\overline{q_{d}}+1$, because node S is infinitely backlogged and thus always has one packet in its server. Moreover, the arrival rate to the queueing system is equal to the throughput $\lambda_S$ when the system is stable. Our numerical results in Fig. \ref{fig:confirm} confirm the validity of \eqref{eq:thr} and \eqref{eq:delay}.

It is worth to mention that the policies achieving the maximum throughput and minimum delay, respectively, are not necessarily the same. To see this, note that the source throughput is maximized when $p_R^a$ is maximized. That is, the probability that the data buffer $\text{Q}_\text{d}$ is backlogged and the energy buffer $\text{Q}_\text{e}$ has at least $K$ energy units, is maximized. However, this does not necessarily minimize $\overline{q_d}$, the average number of data packets backlogged at node R. The throughput-optimal and delay-optimal TS policies are derived in subsequent sections. 

\section{Optimal Static Policies}
Under a static TS policy, node R selects an operation mode with a fixed probability irrespective of the states of the data and energy buffers. That is, $\alpha_s=\alpha~ \forall s$, where $\alpha_s$ was defined in Section \ref{Mode selection and underlying QBD process}. In this section, we derive the optimal $\alpha$ that maximizes the throughput and minimizes the average transmission delay, respectively. 

\subsection{Throughput-optimal static policy}
\label{Throughput-optimal static policy}
By the following lemmas, we prove that the throughput-optimal static policy is the policy that keeps $\text{Q}_\text{d}$, the data buffer at node R, at the boundary of stability\footnote{The boundary of stability of a queue is defined as the point in which if the arrival rate is increased by any $ \epsilon>0$, the queue becomes unstable i.e., its arrival rate exceeds its departure rate.}. 


\begin{lemma}
$p^a_R$ is an increasing function of $\alpha$ when $\text{Q}_\text{d}$ is stable.
\label{lem:stability}
\end{lemma}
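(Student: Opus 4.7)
The plan is to use flow conservation on the data buffer $\text{Q}_\text{d}$: whenever $\text{Q}_\text{d}$ admits a stationary distribution, its long-run per-slot arrival and departure rates must coincide, and this single balance equation pins down $p^a_R$ as an explicit function of $\alpha$.

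First I would identify the per-slot arrival rate $\lambda_{id}$ to $\text{Q}_\text{d}$. Because the S-R channel is ideal, node R decodes the source packet whenever it is in DD mode, and it stores the packet iff the S-D transmission fails. Since under a static policy DD mode is selected with probability $\alpha$ independently across slots, and the S-D link succeeds with probability $p^{det}_S$, this yields $\lambda_{id}=\alpha(1-p^{det}_S)$. Next I would identify the per-slot departure rate $\lambda_{od}$. A departure from $\text{Q}_\text{d}$ occurs precisely when R is active at the start of the second subslot (probability $p^a_R$ by definition of that quantity) and the R-D transmission succeeds (probability $p^{det}_R$, independent of the buffer state), giving $\lambda_{od}=p^a_R\,p^{det}_R$.

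Finally, I would invoke the stability hypothesis. If $\text{Q}_\text{d}$ has a stationary distribution, the mean backlog cannot drift, so $E[A_t]=E[D_t]$ in steady state; that is, $\lambda_{id}=\lambda_{od}$. Equating the two expressions yields
\begin{equation*}
p^a_R \,=\, \frac{\alpha\,(1-p^{det}_S)}{p^{det}_R},
\end{equation*}
which is linear and strictly increasing in $\alpha$ (assuming $p^{det}_R>0$ and $p^{det}_S<1$, which are the nontrivial regimes).

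The main subtlety I foresee is a bookkeeping one rather than a conceptual one: in a DD-slot, the newly stored packet may itself be transmitted in subslot 2 whenever R is active, so within a single slot the arrival and departure indicators $A_t$ and $D_t$ are not independent. This coupling, however, affects only the joint within-slot dynamics; the marginal expectations $E[A_t]$ and $E[D_t]$ depend only on the static probability $\alpha$, the channel detection probabilities, and the steady-state quantity $p^a_R$, so the balance identity $\lambda_{id}=\lambda_{od}$ still holds and the monotonicity conclusion follows.
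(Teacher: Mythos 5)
Your proof is correct and follows essentially the same route as the paper: it writes $\lambda_{id}=\alpha(1-p^{det}_S)$ and $\lambda_{od}=p^a_R\,p^{det}_R$, equates them under stability, and reads off that $p^a_R=\alpha(1-p^{det}_S)/p^{det}_R$ is increasing in $\alpha$. Your extra remark about the within-slot coupling of arrivals and departures is a reasonable clarification but does not change the argument.
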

\begin{proof}
Define $\lambda_{id}$ and $\lambda_{od}$ to be the arrival and departure rates at $\text{Q}_\text{d}$, respectively. Then
\begin{equation}
\lambda_{id}=\alpha (1-p^{det}_S),
\label{eqlamid}
\end{equation}
\begin{equation}
\lambda_{od}=p^a_R p^{det}_R.
\label{eqlamod}
\end{equation}
 When $\text{Q}_\text{d}$ is stable, its arrival rate is equal to its departure rate, i.e., $\lambda_{id}=\lambda_{od}$. Thus, from \eqref{eqlamid} and \eqref{eqlamod}, $p^a_R$ is derived as
\begin{equation}
p^a_R=\frac{\alpha (1-p^{det}_S)}{p^{det}_R},
\label{eqparalpha}
\end{equation}
which is obviously an increasing function of  $\alpha$.
\end{proof}
\begin{lemma}
The blocking probability of the energy buffer $\text{Q}_\text{e}$, denoted by $p_b$, is zero if and only if $\text{Q}_\text{d}$ is at the boundary of stability or unstable.
\label{lem:pb}
\end{lemma}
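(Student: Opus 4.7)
My plan is to derive an energy flow-balance identity that reduces the lemma to a statement about $p^a_R$, then handle the forward direction and the boundary case of the reverse via Lemma~1, and the unstable case of the reverse via a direct sample-path bound on $q_e$. First, in any long run the effective energy arrival rate into $\text{Q}_\text{e}$ (harvested units that are not blocked) must equal the energy consumption rate, and since each transmission drains exactly $K$ units, this yields
\[
(1-\alpha)\,E[\Gamma] \;=\; K\, p^a_R \;+\; \lambda_{\mathrm{blk}},
\]
where $\lambda_{\mathrm{blk}}$ denotes the long-run rate of blocked energy units. Since $p_b = 0$ iff $\lambda_{\mathrm{blk}} = 0$, the lemma is equivalent to the identity $p^a_R = (1-\alpha)E[\Gamma]/K$ holding in exactly the stated regime.

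For the forward direction I would argue by contradiction: suppose $p_b = 0$ but $\text{Q}_\text{d}$ is strictly stable. Lemma~1 forces $p^a_R = \alpha(1-p^{det}_S)/p^{det}_R$, and combining with the $p_b=0$ identity pins $\alpha$ down to the unique value $\alpha^{*} := E[\Gamma]\,p^{det}_R / \bigl(K(1-p^{det}_S) + E[\Gamma]\,p^{det}_R\bigr)$. At $\alpha = \alpha^{*}$ the data arrival rate $\alpha^{*}(1-p^{det}_S)$ exactly equals the maximum data departure rate supportable by the harvested energy, so by the footnote's definition $\text{Q}_\text{d}$ sits precisely at the boundary of stability---contradicting strict stability. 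The boundary case of the reverse direction is an immediate corollary of the same computation: at $\alpha = \alpha^{*}$, Lemma~1 gives $p^a_R = \alpha^{*}(1-p^{det}_S)/p^{det}_R$, which by the definition of $\alpha^{*}$ equals $(1-\alpha^{*})E[\Gamma]/K$, hence $p_b = 0$.

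The main technical step, and the principal obstacle, is the unstable case. Here $q_d \geq 1$ for all sufficiently large $t$ almost surely, so a transmission (draining $K$ units) occurs whenever the energy level at the start of the second subslot is at least $K$. A four-way case analysis on (EH/DD) $\times$ (drain/no-drain) then shows $q_e(t+1) \leq N - b_{\max}$ regardless of $q_e(t) \in \{0,\ldots,N\}$: in any drain slot, $q_e(t+1) \leq N - K \leq N - b_{\max}$ using $b_{\max} \leq K$; in any no-drain slot the post-subslot-1 energy must be $< K$, so $q_e(t+1) < K \leq N - b_{\max}$ using $N \geq 2K$. Consequently the energy buffer has headroom of at least $b_{\max}$ at the start of every subsequent slot, no harvested unit is ever blocked, and $p_b = 0$. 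The intuition ``when energy is the bottleneck, every harvested unit is consumed'' is clear, but turning it into a rigorous finite-buffer guarantee is exactly where the standing assumptions $b_{\max} \leq K$ and $N \geq 2K$ enter in an essential way.
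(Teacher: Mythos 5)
Your flow-balance identity and your treatment of the ``only if'' direction (i.e., $p_b=0$ together with strict stability forces $\alpha=\alpha^{*}$, which is not strictly stable) are sound and essentially match the paper's argument, as does your sample-path bound $q_e \le N-b_{\max}$ for the unstable case. The gap is in the boundary-of-stability half of the ``if'' direction: your argument there is circular. You assert that the boundary occurs at $\alpha=\alpha^{*}$ and then read off $p^a_R=\alpha^{*}(1-p^{det}_S)/p^{det}_R=(1-\alpha^{*})E[\Gamma]/K$ to conclude $p_b=0$. But the location of the boundary is itself a function of $p_b$: combining Lemma~1 with the energy balance at any boundary point $\alpha_{bd}$ gives
\[
\alpha_{bd}=\frac{E(\Gamma)\,p^{det}_R(1-p_b)}{E(\Gamma)\,p^{det}_R(1-p_b)+K(1-p^{det}_S)},
\]
(this is \eqref{eq:alphat} in the paper), which is perfectly consistent with $p_b>0$ and $\alpha_{bd}<\alpha^{*}$. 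Your forward direction only shows that $\alpha^{*}$ is an \emph{upper bound} on the boundary; it does not rule out the boundary sitting strictly below $\alpha^{*}$ with positive blocking. So the algebra alone cannot deliver ``boundary $\Rightarrow p_b=0$''; that is precisely the direction in which the lemma has nontrivial content.

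The paper closes this by running the sample-path argument for the boundary case as well: under a static policy the arrivals to $\text{Q}_\text{d}$ are Bernoulli, so (as for a Geo/G/1 queue at load one) $\text{Q}_\text{d}$ is always backlogged at the boundary, and then the same drain/no-drain case analysis you used for the unstable regime gives $q_e\le N-b_{\max}$ at the start of every slot, hence $p_b=0$. You could repair your proof either by extending your four-way case analysis to the boundary case (which requires justifying persistent backlog there, a point the paper itself treats somewhat loosely), or by the following indirect route using pieces you already have: if the boundary were at some $\alpha_{bd}<\alpha^{*}$, then for $\alpha\in(\alpha_{bd},\alpha^{*})$ the queue would be unstable, your unstable-case argument would give $p_b=0$, and the departure rate $(1-\alpha)E[\Gamma]p^{det}_R/K$ would then strictly exceed the arrival rate $\alpha(1-p^{det}_S)$, contradicting instability. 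Either way, as written the boundary case is not established.
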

\begin{proof}
We first prove the ``if" part. When $\text{Q}_\text{d}$ is at the boundary of stability or unstable, there is always a packet in $\text{Q}_\text{d}$. In fact, since in static policy $\alpha_s=\alpha~ \forall s$, the arrival process at $\text{Q}_\text{d}$ is a Bernoulli process, thus similar to a Geo/G/1 queue, $\text{Q}_\text{d}$ will be always backlogged at the boundary of stability. Thus, at each time slot, if $q_e \geq K$, then $K$ energy units are consumed to transmit a packet. Since $b_{\text{max}}\leq K$ and $N \geq 2K$ (see Section \ref{subsec:system setup}), the probability that the energy buffer is full is zero, resulting in a zero blocking probability at $\text{Q}_\text{e}$.

Now, we prove the ``only if" part by contradiction. Let $\lambda_{ie}$ denote the energy arrival rate after blocking at the energy buffer. Likewise, define $\lambda_{oe}$ to be the departure rate from $\text{Q}_\text{e}$. Then, we have
\begin{equation}
\lambda_{oe}=\lambda_{ie}=(1-\alpha)E(\Gamma)(1-p_b) 
\label{eq:lamoe}
\end{equation}
where $E(\Gamma)$ is the average of $\Gamma$. Now, suppose $p_b=0$ when $\text{Q}_\text{d}$ is stable. Then, \eqref{eq:lamoe} becomes $\lambda_{oe}=\lambda_{ie}=(1-\alpha)E(\Gamma)$, implying that $\lambda_{oe}$ and $\lambda_{ie}$ are decreasing functions of $\alpha$. Note that a transmission attempt of a data packet at node R consumes $K$ energy units at the same time. Thus, a decrease in $\lambda_{oe}$ is equivalent to a decrease in the data departure rate, i.e., $\lambda_{od}$. On the other hand, due to stability, the arrival and departure rates at $\text{Q}_\text{d}$ are equal, i.e., $\lambda_{od}=\lambda_{id}=\alpha (1-p^{det}_S)$, implying that $\lambda_{od}$ increases with $\alpha$. This is a contradiction, except when the equation $\lambda_{od}=\lambda_{id}$ does not hold by increasing $\alpha$, i.e., $\text{Q}_\text{d}$ is at the boundary of stability, or when the data arrival and departure rates are not equal even before increasing $\alpha$, i.e., $\text{Q}_\text{d}$ is unstable.

\end{proof}
\begin{lemma}
$p^a_R$ is a decreasing function of $\alpha$ when $\text{Q}_\text{d}$ is unstable.
\label{lem:unstab}
\end{lemma}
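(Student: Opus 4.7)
The plan is to exploit the fact that when $\text{Q}_\text{d}$ is unstable, its backlog grows without bound, so in the long run there is always a data packet waiting at node R. Under this reduction, $p^a_R$ equals the long-run fraction of slots in which the energy buffer contains at least $K$ energy units, and a simple energy-balance argument then pins this fraction down as an explicit linear decreasing function of $\alpha$.

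First, I would argue that when $\text{Q}_\text{d}$ is unstable, $\lambda_{id}>\lambda_{od}$, so by the standard queueing-theoretic fact (arrival rate exceeding the achievable service rate forces the backlog to infinity almost surely in a discrete-time system with bounded per-slot arrivals and departures), the long-run fraction of slots in which $\text{Q}_\text{d}$ is non-empty equals one. Consequently, the active probability reduces to $p^a_R=\Pr\{q_e\geq K\}$, where the probability is understood in the long-run time-average sense.

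Next, I would invoke Lemma \ref{lem:pb}: instability of $\text{Q}_\text{d}$ implies $p_b=0$, so no energy is lost due to a full energy buffer. Writing the energy balance at $\text{Q}_\text{e}$, the effective arrival rate is $\lambda_{ie}=(1-\alpha)E(\Gamma)$ from \eqref{eq:lamoe}, while the departure rate is $\lambda_{oe}=K\,p^a_R$, since every active slot draws exactly $K$ energy units and node R is active in a fraction $p^a_R$ of slots. Equating these yields
\begin{equation}
p^a_R=\frac{(1-\alpha)E(\Gamma)}{K},
\label{eq:paunstable}
\end{equation}
which is strictly decreasing in $\alpha$.

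I expect the main obstacle to be the first step: rigorously justifying that an unstable $\text{Q}_\text{d}$ is backlogged in essentially every slot so that $p^a_R$ can be identified with $\Pr\{q_e\geq K\}$. This requires care because the joint process $(q_d,q_e)$ is not positive recurrent in this regime, so stationary-distribution arguments do not apply directly; instead, one must use a long-run time-average/ergodic argument, noting that $q_d\to\infty$ almost surely, together with the fact that $q_e$ lives in the finite set $\{0,1,\dots,N\}$ so its time averages still exist and are well defined. Once this reduction is granted, the energy-balance computation leading to \eqref{eq:paunstable} is essentially immediate.
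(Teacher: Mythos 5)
Your proof is correct and follows essentially the same route as the paper: invoke Lemma~\ref{lem:pb} to get $p_b=0$ under instability, then balance flows at the finite energy buffer ($\lambda_{oe}=Kp^a_R=(1-\alpha)E(\Gamma)=\lambda_{ie}$) to obtain $p^a_R=(1-\alpha)E(\Gamma)/K$, which is what \eqref{eq:eqlamoe2} gives after rearrangement. Your first step (showing $\text{Q}_\text{d}$ is eventually always backlogged) is actually superfluous for this conclusion, since $\lambda_{oe}=Kp^a_R$ holds by the definition of an active slot regardless of how often the data buffer is empty.
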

\begin{proof}
A packet departure from $\text{Q}_\text{d}$ consumes $\frac{K}{p^{det}_R}$ energy units on average, since each packet is transmitted $\frac{1}{p^{det}_R}$ times on average. Thus, the energy departure rate, $\lambda_{oe}$, can be written as a function of data departure rate, $\lambda_{od}$, as follows:
\begin{equation}
\lambda_{oe}= \lambda_{od} \frac{K}{p^{det}_R}.
\label{eq:eqlamoe1}
\end{equation}  
From \eqref{eqlamod}, \eqref{eq:lamoe} and \eqref{eq:eqlamoe1}, we have
\begin{equation}
\alpha= 1-\frac{p^a_R K}{E(\Gamma)(1-p_b)}.
\label{eq:eqlamoe2}
\end{equation}  
According to Lemma \ref{lem:pb}, $p_b=0$ when $\text{Q}_\text{d}$ is unstable. Thus, it can be seen from \eqref{eq:eqlamoe2} that $p^a_R$ is decreasing with $\alpha$ when $\text{Q}_\text{d}$ is unstable. 
   
\end{proof}

In the following proposition, we derive the optimum $\alpha$ that maximizes the throughput, represented by $\alpha^{T}$, and show that it keeps $\text{Q}_\text{d}$ at the boundary of stability.
\begin{proposition}
The throughput-optimal static policy keeps $\text{Q}_\text{d}$ at the boundary of stability. In addition, $\alpha^{T}$ is given by
\begin{equation}
\alpha^{T} = \frac{E(\Gamma) p^{det}_{R}}{E(\Gamma) p^{det}_{R} +K(1-p^{det}_S)}.
\label{eq:alpha}
\end{equation}
\label{pro:alpha}
\end{proposition}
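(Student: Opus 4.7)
The plan is to reduce throughput maximization to maximizing $p^a_R$, then use Lemmas 1--3 to locate the maximum at the stability boundary, and finally derive $\alpha^T$ from the rate-balance equations holding on that boundary.

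First I would recall from \eqref{eq:thr} that $\lambda_S = p^{det}_S + p^a_R p^{det}_R$, so that maximizing throughput over $\alpha \in [0,1]$ is equivalent to maximizing $p^a_R$. Since $p^a_R$ varies continuously with $\alpha$, Lemma \ref{lem:stability} (monotonically increasing on the stable region) and Lemma \ref{lem:unstab} (monotonically decreasing on the unstable region) together imply that the maximum of $p^a_R$ is attained exactly at the boundary of stability of $\text{Q}_\text{d}$, establishing the first assertion of the proposition.

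Next, I would compute the value of $\alpha$ that places $\text{Q}_\text{d}$ on the boundary of stability by matching rates on both buffers. By Lemma \ref{lem:pb} we have $p_b = 0$ at this boundary, so the energy-buffer balance \eqref{eq:lamoe} simplifies to $\lambda_{oe} = (1-\alpha)E(\Gamma)$. On the other hand, each successful departure from $\text{Q}_\text{d}$ requires on average $K/p^{det}_R$ energy units, because the head-of-line packet is attempted $1/p^{det}_R$ times on average at a cost of $K$ units each, giving $\lambda_{oe} = \lambda_{od} K/p^{det}_R$. At the stability boundary the data balance $\lambda_{od} = \lambda_{id} = \alpha(1-p^{det}_S)$ from \eqref{eqlamid} applies. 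Substituting these into the common value of $\lambda_{oe}$ yields a linear equation in $\alpha$ whose unique solution is \eqref{eq:alpha}.

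The main obstacle I anticipate is the transition between the stable and unstable regimes: Lemmas \ref{lem:stability} and \ref{lem:unstab} treat the two open regions separately, and one must argue that the supremum of $p^a_R$ is actually realized at (and not just approached from) the boundary. This is handled by continuity of $p^a_R$ in $\alpha$, which follows because $\text{Q}_\text{d}$ behaves as a Geo/G/$1$-type queue whose utilization depends continuously on $\alpha$, and the expressions \eqref{eqparalpha} and \eqref{eq:eqlamoe2} for $p^a_R$ on either side agree at the critical $\alpha$ solved above. Once continuity is established, the two lemmas pin the global maximum to that unique point, completing the proof.
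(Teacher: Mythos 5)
Your proposal is correct and follows essentially the same route as the paper: Lemmas \ref{lem:stability} and \ref{lem:unstab} locate the maximum of $p^a_R$ at the boundary of stability, and combining the data-rate balance \eqref{eqlamid}--\eqref{eqlamod} with the energy-rate balance \eqref{eq:lamoe}--\eqref{eq:eqlamoe1} under $p_b=0$ (Lemma \ref{lem:pb}) yields the same linear equation whose solution is \eqref{eq:alpha}. Your added continuity argument showing that the supremum of $p^a_R$ is actually attained at, and not merely approached from, the boundary is a small but welcome tightening of a step the paper leaves implicit.
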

\begin{proof}
Lemmas \ref{lem:stability} and \ref{lem:unstab} show that $p_R^a$ increases with $\alpha$ when $\text{Q}_\text{d}$ is stable and decreases with $\alpha$ when $\text{Q}_\text{d}$ is unstable. Thus, the maximum of $p_R^a$ which leads to the maximum throughput according to \eqref{eq:thr}, is attained by carefully setting $\alpha$ so that $\text{Q}_\text{d}$ is at the boundary of stability. 
Now, in order to derive $\alpha$ which keeps $\text{Q}_\text{d}$ at the boundary of stability, i.e., $\alpha^{T}$, we use \eqref{eqparalpha} and \eqref{eq:eqlamoe2}. 
From these equations, $\alpha$ is written in terms of $p_b$ as in the following

\begin{equation}
\alpha = \frac{E(\Gamma) p^{det}_{R}(1-p_b)}{E(\Gamma) p^{det}_{R}(1-p_b) +K(1-p^{det}_S)}~. 
\label{eq:alphat}
\end{equation}
The above equation holds when $\text{Q}_\text{d}$ is stable including the boundary of stability. Indeed, according to Lemma \ref{lem:pb}, $p_b=0$ at the boundary of stability. Therefore, \eqref{eq:alpha} is obtained. 

\end{proof}
\subsection{Delay-optimal static policy}
\label{delay-optimal static policy}
It is interesting to note that the relay cooperation may degrade the average transmission delay in some cases. For example, this may occur when $p_R^{det}<p_S^{det}$. In this subsection, we first introduce the necessary and sufficient condition for the non-cooperation policy to be delay-optimal. Here, the non-cooperation policy refers to the one that sets $\alpha=0$. That is, the relay never decodes a packet, and therefore never helps to relay a packet. We then derive the optimal $\alpha$ that minimizes the average transmission delay, denoted by $\alpha^{D}$, when the non-cooperation policy is not optimal. 

Suppose that $\overline{D_R}$, the average system delay at node R (comprised of queueing delay and transmission time\footnote{The transmission time of a packet at node $R$ starts from the moment it becomes the head-of-line packet in the data buffer ($\text{Q}_\text{d}$) and terminates when it is successfully transmitted.}), is an increasing convex function of $\alpha$, or equivalently the arrival rate at $\text{Q}_\text{d}$. This is a valid assumption, due to the fact that in a queue, the queue length and subsequently the total delay increase as the rate of random arrivals at the queue increases (refer to P-K formula in \cite{dl:klein} as an example). Our simulation in Fig. \ref{fig:confirm2} also confirms the validity of this assumption.


\begin{lemma} 
The non-cooperation policy is delay-optimal, i.e., $\alpha^{D}=0$, if and only if 
\begin{equation}
\overline{D_R}|_{\alpha \rightarrow 0}> \frac{1}{p^{det}_S},
\label{eqcond}
\end{equation}
where $\overline{D_R}|_{\alpha \rightarrow 0}$ is the average system delay at node R when $\alpha \rightarrow 0$. 
\label{lem:lem5}
\end{lemma}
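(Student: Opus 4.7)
The plan is to rewrite $\tau$ as an explicit function of $\alpha$, compare $\tau(\alpha)$ directly against the trivially computable value $\tau(0)=1/p^{det}_S$, and reduce the whole question to a condition on $\overline{D_R}(\alpha)$ alone.

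First, I would apply Little's law at $\text{Q}_\text{d}$ together with \eqref{eqlamid} to write $\overline{q_d}=\alpha(1-p^{det}_S)\,\overline{D_R}(\alpha)$. Combining \eqref{eqparalpha} with \eqref{eq:thr} gives $\lambda_S=p^{det}_S+\alpha(1-p^{det}_S)$ whenever $\text{Q}_\text{d}$ is stable. Substituting these into \eqref{eq:delay} yields
\begin{equation*}
\tau(\alpha)=\frac{1+\alpha(1-p^{det}_S)\,\overline{D_R}(\alpha)}{p^{det}_S+\alpha(1-p^{det}_S)},
\end{equation*}
so in particular $\tau(0)=1/p^{det}_S$, as one expects since at $\alpha=0$ no packets ever reach node R and the source simply retransmits until decoded at D.

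Next, I would clear denominators in the inequality $\tau(\alpha)<1/p^{det}_S$ for an arbitrary $\alpha>0$. A direct algebraic simplification cancels the common factor $\alpha(1-p^{det}_S)$ and produces the strikingly clean equivalence
\begin{equation*}
\tau(\alpha)<\tfrac{1}{p^{det}_S}\ \Longleftrightarrow\ \overline{D_R}(\alpha)<\tfrac{1}{p^{det}_S}.
\end{equation*}
Hence $\alpha=0$ is at least as good as any $\alpha>0$ if and only if $\overline{D_R}(\alpha)\geq 1/p^{det}_S$ for every admissible $\alpha>0$.

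The final step invokes the assumed monotonicity of $\overline{D_R}$ in $\alpha$ stated just before the lemma. For the ``if'' direction, $\overline{D_R}(\alpha)\geq\overline{D_R}(0^+)>1/p^{det}_S$ for all $\alpha>0$ forces $\tau(\alpha)>\tau(0)$, so $\alpha^D=0$. For the ``only if'' direction, I would argue by contrapositive: if $\overline{D_R}(0^+)<1/p^{det}_S$, continuity of $\overline{D_R}$ produces a small $\alpha>0$ with $\overline{D_R}(\alpha)<1/p^{det}_S$, hence $\tau(\alpha)<\tau(0)$, contradicting the optimality of $\alpha=0$. The only delicate point I foresee is the knife-edge case $\overline{D_R}(0^+)=1/p^{det}_S$, which the statement sidesteps by using a strict inequality; a fully rigorous treatment there would rely on the convexity assumption on $\overline{D_R}$ to guarantee that $\tau$ still has a strict local (and, by the algebraic equivalence above, global) minimum at $\alpha=0$.
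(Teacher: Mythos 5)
Your proposal is correct and follows essentially the same route as the paper's own proof: express $\tau(\alpha)$ via Little's law and the stability identity $\lambda_S=p^{det}_S+\alpha(1-p^{det}_S)$, reduce $\tau(\alpha)\gtrless 1/p^{det}_S$ to $\overline{D_R}(\alpha)\gtrless 1/p^{det}_S$, and then invoke the assumed monotonicity of $\overline{D_R}$ in $\alpha$ to collapse the ``for all $\alpha$'' condition to the single limit $\alpha\to 0$. Your explicit handling of the contrapositive for the ``only if'' direction and your remark on the knife-edge case $\overline{D_R}|_{\alpha\to 0}=1/p^{det}_S$ are slightly more careful than the paper's exposition, but they do not change the argument.
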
 
\begin{proof}
According to Little's law, $\overline{D_R}$ is derived as $\overline{D_R}=\frac{\overline{q_d}}{\lambda_{id}}$. Thus, substituting \eqref{eqlamid} in \eqref{eq:delay}, we have
\begin{equation}
\tau =  \frac{\overline{q_d}+1}{\lambda_S}=\frac{\alpha(1-p^{det}_S)\overline{D_R}+1}{\lambda_S}.
\label{eq:lem31}
\end{equation}
Moreover, since $\text{Q}_\text{d}$ is stable (otherwise, $\overline{q_d}$ and thus, $\tau$ is infinite), $\lambda_{id}=\lambda_{od}$. Thus, from \eqref{eq:thr}, \eqref{eqlamid} and \eqref{eqlamod}, $\lambda_S$ can be written as
\begin{equation}
\lambda_S =  \alpha(1-p^{det}_S)+p^{det}_S.
\label{eq:lem32}
\end{equation}
Substituting \eqref{eq:lem32} to \eqref{eq:lem31}, we have
\begin{equation}
\tau=\frac{\alpha(1-p^{det}_S)\overline{D_R}+1}{\alpha(1-p^{det}_S)+p^{det}_S}.
\label{eqtaue}
\end{equation} 

The average transmission delay in the non-cooperation policy is equal to $\frac{1}{p^{det}_S}$ ($\alpha=0$ in \eqref{eqtaue}). Obviously, non-cooperation is delay-optimal  if and only if for any given $\alpha$, the corresponding $\tau$ is greater than the non-cooperation delay, i.e., $\tau>\frac{1}{p^{det}_S}$. Substituting this to \eqref{eqtaue} leads to the following inequality:
\begin{equation}
\overline{D_R}|_{\alpha}> \frac{1}{p^{det}_S}, ~~~ \forall \alpha.
\label{eqcond1}
\end{equation}
Recall that $\overline{D_R}$ is an increasing function of $\alpha$. Thus, if \eqref{eqcond1} holds when $\alpha \rightarrow 0$, then the inequality holds for all other $\alpha$'s. Therefore, \eqref{eqcond} is the necessary and sufficient condition for the non-cooperation policy to be delay optimal.
\end{proof}
 
If the condition given by \eqref{eqcond} does not hold, then the delay-optimal DD probability is positive ($\alpha^{D}>0$). In particular
\begin{equation}
 \begin{aligned}
 \alpha^{D} = & \underset{\alpha \in [0,\alpha^{T}]}{\text{argmin}}
 & & \tau=\frac{\overline {q_d}+1}{\lambda_S}=\frac{\overline {q_d}+1}{\alpha(1-p^{det}_S)+p^{det}_S}~, \\
 \end{aligned}
 \label{eqminstatic}
\end{equation}
where $\alpha^{T}$ and $\lambda_S$ are derived in \eqref{eq:alpha} and \eqref{eq:lem32}, respectively. Note that according to Proposition \ref{pro:alpha}, for $\alpha>\alpha^{T}$, $Q_d$ is unstable and $\tau$ becomes infinite.

To solve \eqref{eqminstatic}, we first derive $\overline q_d$ as a function of $\alpha$. Then, we prove in Lemma \ref{lem:convex} that $\tau$ is a convex function of $\alpha$. Finally, a numerical convex optimization method can be deployed to solve \eqref{eqminstatic}. To derive $\overline q_d$, observe that by setting $\alpha_s=\alpha ~\forall s$, the related QBD process becomes homogeneous, thus it can be solved by the matrix analytic method \cite{dl:lat}. As such, we have 
\begin{equation}
\overline{q_d} = \boldsymbol{\pi}_0 \sum\limits_{l=0}^{\infty} l \mathbf{R}^l  \mathbf{1}=
\boldsymbol{\pi}_0 \mathbf{R} (\mathbf{I}-\mathbf{R})^{-2} \mathbf{1}~,
\label{eqlength}
\end{equation} 
where $\boldsymbol{\pi}_0$ is the stationary distribution vector of the states $(0,q_e)$, where $q_e \in \{ 0,1,...,N \} $ (i.e., the states of the first level of QBD). 
Likewise, $\mathbf{R}$\footnote{$\mathbf{R}_{ij}$ is the expected number of visits to the state $(n+1,j)$, before a return to level $n$ or previous levels given that the process starts from state $(n,i)$.} is a matrix related to the QBD process \cite{dl:lat}.
\begin{lemma}
$\tau(\alpha)$ is convex for the range of $\alpha$ that yields $\overline{D_R}<\frac{1}{p^{det}_S}$, i.e., when cooperation outperforms non-cooperation (see Lemma \ref{lem:lem5} and \eqref{eqcond1}).
\label{lem:convex}
\end{lemma}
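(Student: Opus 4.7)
The plan is to prove $\tau''(\alpha) \geq 0$ on the range where $\overline{D_R}(\alpha) < 1/p^{det}_S$, working directly from the closed form \eqref{eqtaue}. The first step will be to substitute the expression \eqref{eq:lem32} for $\lambda_S$ and perform polynomial division in the numerator of \eqref{eqtaue}, rewriting
\begin{equation*}
\tau(\alpha) = \overline{D_R}(\alpha) + \frac{1 - p^{det}_S\,\overline{D_R}(\alpha)}{\alpha(1-p^{det}_S) + p^{det}_S}.
\end{equation*}
Two observations make this decomposition attractive: the leading summand is already convex in $\alpha$ by the standing assumption stated just before Lemma \ref{lem:lem5}; and the numerator $1 - p^{det}_S\overline{D_R}$ of the fractional part is strictly positive throughout the regime of interest by the hypothesis of the lemma.

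Next, I would differentiate the fractional term twice by the quotient rule. Writing $u(\alpha) = 1 - p^{det}_S\overline{D_R}(\alpha)$ and $v(\alpha) = \alpha(1-p^{det}_S) + p^{det}_S$, so that $u' = -p^{det}_S\overline{D_R}'$, $u'' = -p^{det}_S\overline{D_R}''$, $v' = 1-p^{det}_S$ and $v'' = 0$, the identity $(u/v)'' = u''/v - 2u'v'/v^2 + 2u(v')^2/v^3$ combines with the $\overline{D_R}''$ coming from the leading summand to yield a formula of the form
\begin{equation*}
\tau''(\alpha) = \frac{\alpha(1-p^{det}_S)}{v(\alpha)}\,\overline{D_R}''(\alpha) + \frac{2\,p^{det}_S(1-p^{det}_S)}{v(\alpha)^2}\,\overline{D_R}'(\alpha) + \frac{2\,(1-p^{det}_S)^2}{v(\alpha)^3}\bigl(1 - p^{det}_S\overline{D_R}(\alpha)\bigr).
\end{equation*}
The crucial algebraic simplification is that the apparently negative contribution $u''/v = -p^{det}_S\overline{D_R}''/v$ merges with the bare $\overline{D_R}''$ from the leading term to leave the coefficient $(v - p^{det}_S)/v = \alpha(1-p^{det}_S)/v$, which is non-negative for $\alpha \geq 0$.

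Finally, I would invoke three sign facts: $\overline{D_R}'(\alpha) \geq 0$ and $\overline{D_R}''(\alpha) \geq 0$ by the standing assumption that $\overline{D_R}$ is increasing and convex in $\alpha$; and $1 - p^{det}_S\overline{D_R}(\alpha) > 0$ by the hypothesis of this lemma. Each of the three terms in the displayed formula is therefore non-negative on the specified range, so $\tau''(\alpha) \geq 0$ and convexity follows.

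The main obstacle is purely bookkeeping: one must carry out the second differentiation carefully enough to see that the $u''/v$ contribution (negative in isolation) combines cleanly with the leading $\overline{D_R}''$ so that the $\overline{D_R}''$-coefficient in $\tau''$ ends up non-negative rather than mixed-sign. A secondary concern is regularity (differentiability of $\overline{D_R}(\alpha)$), but this is already implicit in the preceding assumption that $\overline{D_R}$ is a convex function of $\alpha$; no further analytic input is needed.
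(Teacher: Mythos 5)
Your proposal is correct and essentially reproduces the paper's own argument: after the polynomial-division step your expression for $\tau''(\alpha)$ is term-for-term identical to the paper's \eqref{eq:secondder}, and you invoke the same three sign facts (convexity and monotonicity of $\overline{D_R}$, plus $\overline{D_R}<1/p^{det}_S$) to conclude each term is non-negative. The decomposition $\tau=\overline{D_R}+\bigl(1-p^{det}_S\overline{D_R}\bigr)/\bigl(\alpha(1-p^{det}_S)+p^{det}_S\bigr)$ is only a cleaner way to organize the same differentiation, not a different route.
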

\begin{proof}
From \eqref{eqtaue}, $\frac{d^2\tau}{d\alpha^2}$ is computed as a function of $\overline D_R$ and its derivative as
\begin{equation}
\begin{aligned}
\frac{d^2\tau}{d\alpha^2}&=\\&\frac{d^2\overline{D_R}}{d\alpha^2}\frac{\alpha (1-p^{det}_S)}{\alpha (1-p^{det}_S)+p^{det}_S}+2\frac{d\overline{D_R}}{d\alpha}\frac{p^{det}_S(1-p^{det}_S)}{(\alpha (1-p^{det}_S)+p^{det}_S)^2}\\
&+2(1-p^{det}_S)^2\frac{1-p^{det}_S\overline{D_R}}{(\alpha (1-p^{det}_S)+p^{det}_S)^3}.
\end{aligned}
\label{eq:secondder}
\end{equation}
Since $\overline{D_R}$ is assumed to be convex and increasing in terms of $\alpha$, then $\frac{d^2\overline{D_R}}{d\alpha^2}>0$ and $\frac{d\overline{D_R}}{d\alpha}>0$. Thus, the first and second additive terms in the above equation are positive. The last one is also positive due to the assumption that $\overline{D_R}<\frac{1}{p^{det}_S}$. 
\end{proof}

Due to the convexity of $\tau$, we can use off-the-shelf numerical methods, e.g., golden section
search \cite{dl:chong}, to solve \eqref{eqminstatic}, as detailed in Algorithm \ref{algorithm1}.
\begin{algorithm} [t!]
\small{
  \caption{delay-optimal static policy}
  \begin{algorithmic}[1]   
  \IF  {\eqref{eqcond} holds}
   \STATE No-cooperation is delay-optimal
   \ELSE
   \STATE $a=0$ and $b=1$ 
    \WHILE{ ( $\frac{(b-a)}{b} \geq \epsilon$ ) } 
    \STATE $a_1 = a+0.382(b-a)$
    \STATE $b_1 = b-0.382(b-a)$
    \STATE Derive $\tau(a_1)$ and $\tau(b_1)$ from QBD process
    \STATE if $\tau(a_1)<\tau(b_1)$ then $b = b_1$
    \STATE if $\tau(a_1) \geq \tau(b_1)$ then $a = a_1$ 
    \ENDWHILE
    \ENDIF
  \end{algorithmic}
  \label{algorithm1} 
}
\end{algorithm}

\section{Optimal Dynamic Policies}
\label{sec:dynamic}
In this section, we derive the throughput-optimal and delay-optimal dynamic policies. In a dynamic policy, the mode selection decisions are made at the beginning of each slot based on the states of the data and energy buffers, i.e., $s$. Unlike the static policy, $\alpha_s$ is not the same for different states. Thus, the underlying QBD is not homogeneous in general. 

\subsection{Throughput-optimal dynamic policy}

Define $\{ \alpha_s \}_{s \in S}$ to be a dynamic policy, where $S$ is the set of all possible states at node R. Let $\overline{\alpha}=\sum_{s \in S} p_s \alpha_s$ denote the average DD probability corresponding to the dynamic policy, where $p_s$ is the probability of node R being in state $s$ at the beginning of a slot under policy $\alpha_s$. Note that Lemmas \ref{lem:stability}-\ref{lem:unstab} still hold in the dynamic case by replacing $\alpha$ with $\overline{\alpha}$, because all equations in these lemmas are based on average arrival and departure rates. 
The only part that needs revision, is the proof of ``if" part in Lemma \ref{lem:pb}. In this proof, we have assumed that when $\text{Q}_\text{d}$ is at the boundary of stability or unstable, it is always backlogged. Then, we have proved $p_b=0$. However, in the dynamic case, the data and energy arrivals can be controlled at each state. Thus, $\text{Q}_\text{d}$ is not necessarily backlogged at the boundary of stability. In other words, it can be empty in some slots (as will be shown in Proposition \ref{pro:thdy2}). Thus, in Lemma \ref{lem:complete}, we reprove the ``if" part of Lemma \ref{lem:pb} for the dynamic case. Consequently, based on Lemmas \ref{lem:stability}-\ref{lem:unstab}, we can conclude from Proposition \ref{pro:alpha} that the throughput-optimal dynamic policy also keeps $\text{Q}_\text{d}$ at the boundary of stability. Moreover, $\overline{\alpha}$ corresponding to the throughput-optimal dynamic policy is equal to $\alpha^{T}$ in \eqref{eq:alpha}. 

\begin{lemma}
If $\text{Q}_\text{d}$ is at the boundary of stability, then $p_b=0$.
\label{lem:complete}
\end{lemma}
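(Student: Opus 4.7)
The plan is to adapt the ``if'' direction of the static Lemma~\ref{lem:pb} to the dynamic setting. The static proof hinged on the fact that at the boundary of stability $\text{Q}_\text{d}$ is almost surely backlogged (by the Geo/G/1-like Bernoulli-arrival structure), which together with $b_{\max}\le K$ and $N\ge 2K$ ruled out energy blocking on a sample-path basis. As the text explicitly points out, $\text{Q}_\text{d}$ can be empty in some slots at boundary in the dynamic case, so I would replace that sample-path step by a long-run rate-balance argument.

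First I would note that Lemmas~\ref{lem:stability}--\ref{lem:unstab} remain valid in the dynamic setting when $\alpha$ is replaced by the state-averaged $\overline{\alpha}=\sum_{s\in S}p_s\alpha_s$, because their derivations use only time-averaged arrival/departure rates and the argument is already made in the paragraph preceding the lemma. Combining the data balance $\overline{\alpha}(1-p^{det}_S)=p^a_R\,p^{det}_R$ at the boundary of stability of $\text{Q}_\text{d}$ with the energy balance $(1-\overline{\alpha})E(\Gamma)(1-p_b)=p^a_R K$ (which follows from $\lambda_{ie}=\lambda_{oe}$ together with $\lambda_{oe}=\lambda_{od}K/p^{det}_R$), I would recover, exactly as in~\eqref{eq:alphat}, the formula
\[
\overline{\alpha} = \frac{E(\Gamma)\,p^{det}_R(1-p_b)}{E(\Gamma)\,p^{det}_R(1-p_b)+K(1-p^{det}_S)},
\]
whose RHS is strictly decreasing in $p_b$. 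Hence $p_b>0$ forces $\overline{\alpha}<\alpha^T$ and, via Lemma~\ref{lem:stability}, $p^a_R<\alpha^T(1-p^{det}_S)/p^{det}_R$.

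I would finish by contradicting the boundary-of-stability hypothesis whenever $p_b>0$. By Proposition~\ref{pro:alpha}, the static policy $\alpha_s\equiv\alpha^T$ is a particular (degenerate) dynamic policy that stabilizes $\text{Q}_\text{d}$ at its boundary with $\overline{\alpha}=\alpha^T$ and $p_b=0$, so $\overline{\alpha}=\alpha^T$ is known to be feasible. If the given dynamic policy had $\overline{\alpha}<\alpha^T$ while sitting at its own boundary, one could perturb it toward the $\alpha^T$ static policy---for example by mixing a small fraction of the latter into each $\alpha_s$---to strictly raise $\overline{\alpha}$ without destabilizing $\text{Q}_\text{d}$, contradicting boundary. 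Consequently $\overline{\alpha}=\alpha^T$ under any policy at boundary, and the displayed formula then forces $p_b=0$.

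The main obstacle will be making this mixture/perturbation step rigorous, because the stationary distribution, and therefore both $\overline{\alpha}$ and $p^a_R$, depend nonlinearly on $\{\alpha_s\}$, so one cannot simply take convex combinations of rates. I plan to sidestep the nonlinearity by invoking the monotonicity of $p^a_R$ in $\overline{\alpha}$ from the dynamic version of Lemma~\ref{lem:stability}: since any stable policy satisfies $p^a_R=\overline{\alpha}(1-p^{det}_S)/p^{det}_R$ automatically, an increase in $\overline{\alpha}$ caused by the perturbation is accompanied by a matching increase in the departure rate, keeping $\text{Q}_\text{d}$ stable and delivering the required contradiction.
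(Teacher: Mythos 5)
Your proposal follows the paper's route essentially step for step through the core of the argument: both proofs proceed by contradiction, combine the data-rate balance $\overline{\alpha}(1-p^{det}_S)=p^a_R p^{det}_R$ with the energy-rate balance $(1-\overline{\alpha})E(\Gamma)(1-p_b)=p^a_R K$ to recover \eqref{eq:alphat} with $\alpha$ replaced by $\overline{\alpha}$, and conclude that $p_b>0$ would force $\overline{\alpha}<\alpha^T$ for a policy at the boundary of stability. Where you diverge is only in how the contradiction is closed. The paper closes it in one line: since the static policy $\alpha\equiv\alpha^T$ is itself a (state-independent) dynamic policy sitting at the boundary of stability, a second boundary at $\overline{\alpha}'<\alpha^T$ would mean two different boundaries of stability, which is impossible.

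Your substitute for that line — perturbing the policy toward the static $\alpha^T$ policy and arguing the perturbed policy stays stable — has a genuine gap, and it is exactly at the step you flag as the main obstacle. The identity $p^a_R=\overline{\alpha}(1-p^{det}_S)/p^{det}_R$ that you invoke to certify stability of the perturbed policy is itself the rate-balance consequence of stability ($\lambda_{id}=\lambda_{od}$), so using it to establish stability is circular; in addition, mixing $\delta\alpha^T$ into every $\alpha_s$ does not obviously increase $\overline{\alpha}$, because the stationary distribution $\{p_s\}$ shifts nonlinearly with the policy. The non-circular way to finish — and the one implicit in the paper's appeal to Proposition \ref{pro:alpha} — is to use the two monotonicity facts you already cite: on the stable side $p^a_R=\overline{\alpha}(1-p^{det}_S)/p^{det}_R$ is increasing in $\overline{\alpha}$, while on the unstable side $p^a_R=(1-\overline{\alpha})E(\Gamma)/K$ is decreasing, and these curves cross at the unique point $\overline{\alpha}=\alpha^T$. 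Hence the boundary of stability can only occur at $\overline{\alpha}=\alpha^T$, and \eqref{eq:alphat} evaluated there forces $p_b=0$. With that replacement for your perturbation step, your proof coincides with the paper's.
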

\begin{proof}
We prove by contradiction. Suppose that there exists a dynamic policy which keeps $\text{Q}_\text{d}$ at the boundary of stability with $p_b>0$. Then, the average DD probability of such a policy, denoted by $\overline{\alpha}'$, is derived from \eqref{eq:alphat}. 
Note that $\overline{\alpha}$ in \eqref{eq:alphat} decreases with $p_b$. Thus, $\overline{\alpha}'$ is less than $\overline{\alpha}$ at $p_b=0$, i.e., $\alpha^{T}$. However, $\alpha^{T}$ is the throughput-optimal DD probability in the static policy, i.e., a stationary state-independent dynamic policy, that keeps the data buffer at the boundary of stability. Thus, we reach two different boundaries of stability, i.e., a contradiction.

\end{proof}

In the following proposition, we prove that the throughput-optimal dynamic policy is a simple threshold-based policy.


  
\begin{proposition}
The following threshold-based policy is a throughput-optimal policy
\begin{equation}
\alpha^{T}_{s}=\begin{cases}
1~;\quad   s\in\{(0,q_e)| ~q_e \geq e_{th}\} ,\\
0~;  \quad   s\in\{(0,q_e)| ~q_e < e_{th}\},\\
0~;  \quad  s\in\{(1,q_e)|~  0 \leq q_e \leq N \}.
\end{cases}
\label{eq:thop}
\end{equation}
\label{pro:thdy2}
for all choices of $e_{th}\leq N-b_{\text{max}}+1$. Consequently, $\overline{\alpha^T_s}=\sum_{s \in S} p_s \alpha^T_s=\alpha^T$ where $p_s$ is the probability of state $s$ under policy $\alpha^T_s$.
\end{proposition}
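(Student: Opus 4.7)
The plan is to establish three properties of the threshold policy \eqref{eq:thop}: (i) $Q_d$ is stable, (ii) the blocking probability satisfies $p_b = 0$, and (iii) $\overline{\alpha^T_s} = \alpha^T$. By the discussion preceding the proposition (extending Lemmas \ref{lem:stability}--\ref{lem:unstab} and Lemma \ref{lem:complete} from static to dynamic policies by replacing $\alpha$ with $\overline{\alpha}$), these three properties together imply that the threshold policy attains the same $p_R^a$, and hence the same source throughput \eqref{eq:thr}, as the throughput-optimal static policy, which is maximal by Proposition \ref{pro:alpha}.

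Property (i) is immediate from the structure of \eqref{eq:thop}: since $\alpha^T_s = 0$ whenever $q_d \ge 1$, no packet can ever arrive at $Q_d$ while $q_d \ge 1$. Starting from any initial state with $q_d \le 1$, the process therefore satisfies $q_d \in \{0, 1\}$ for all time, and $Q_d$ is trivially stable with $\overline{q_d} \le 1$.

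For property (ii), I would argue by a case analysis that every reachable state in which the relay harvests satisfies $q_e + b_{\max} \le N$. In states $(0, q_e)$ with $q_e < e_{th}$ this follows immediately from the hypothesis $e_{th} \le N - b_{\max} + 1$. For states $(1, q_e)$, a reachability argument shows that $q_e \le N - K$ whenever $q_e$ was reached via a failed R--D transmission from some $(0, q_e')$ with $q_e' \ge K$, and $q_e \le K - 1$ otherwise (the insufficient-energy entry case, relevant only when $e_{th} < K$). Combined with $b_{\max} \le K$ and $N \ge 2K$, these bounds are preserved under subsequent harvest-plus-transmit transitions in the $(1, \cdot)$ sublevel and yield $q_e + b_{\max} \le N$ in every harvesting state, so $p_b = 0$. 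Property (iii) then follows algebraically: the $Q_d$ flow-balance equation \eqref{eqparalpha} (with $\alpha$ replaced by $\overline{\alpha^T_s}$) and the $Q_e$ flow-balance equation \eqref{eq:eqlamoe2} (with $p_b = 0$) together uniquely pin down $\overline{\alpha^T_s} = \alpha^T$ as in \eqref{eq:alpha}. The main technical hurdle is the reachability analysis for (ii), particularly in the regime $e_{th} < K$ permitted by the hypothesis, which requires careful propagation of the $q_e$ bound through successive slots in the $(1, \cdot)$ sublevel, though the arithmetic ultimately involves only the small constants $K$, $b_{\max}$, and $N$.
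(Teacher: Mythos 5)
Your proposal is correct and follows essentially the same route as the paper's proof: stability of $Q_d$ from the at-most-one-packet structure, $p_b=0$ via a case analysis on the reachable harvesting states (using $e_{th}\le N-b_{\max}+1$ for level $0$ and the invariant $q_e\le N-K$, propagated with $b_{\max}\le K$ and $N\ge 2K$, for level $1$), and then the boundary-of-stability/flow-balance argument of Lemma \ref{lem:pb} and Proposition \ref{pro:alpha} to conclude $\overline{\alpha^T_s}=\alpha^T$. The only cosmetic difference is that you organize the level-$1$ analysis by entry mechanism while the paper splits it into before/after the first transmission; the arithmetic and conclusions are identical.
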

\begin{proof}
The policy described in \eqref{eq:thop} allows at most one data packet to be backlogged at node R at a time, thus stabilizing $\text{Q}_\text{d}$. If $p_b=0$ under $\alpha_{s}^{T}$ in addition, then $\alpha_{s}^{T}$ keeps $\text{Q}_{\text{d}}$ at the boundary of stability according to Lemma \ref{lem:pb}.  

To show $p_b$ is indeed $0$ under $\alpha_{s}^{T}$, we consider two cases under this policy. In the first case, $\text{Q}_\text{d}$ is empty. In this case, node R remains in the EH mode until the number of energy units exceeds $e_{th}$. Since $q_e<e_{th}$ and $e_{th} \leq N-b_{\text{max}}+1$, then $q_e \leq N-b_{\max}$. Thus, no blocking occurs. Once $q_e \geq e_{th}$, node R switches to the
DD mode until a packet is received. In the DD mode, no energy is harvested, and thus no energy units are blocked.
Therefore, the blocking probability, $p_b$, is zero in the first case. 

In the second case, there is one packet in $\text{Q}_\text{d}$. Thus, node R remains in EH mode until the packet is successfully transmitted. We show that the blocking probability in this case is also zero by observing the energy status before and after the first transmission of a packet.
A packet is transmitted by node R for the first time in the same slot of its arrival, if $q_e \geq K$ upon its arrival. Since node R is in the DD mode in this slot, no blocking occurs before the first transmission. But, if $q_e < K$ upon the arrival of the packet, node R harvests energy until $q_e \geq K$ before transmitting the packet for the first time. Therefore, $q_e < K$ before switching to the DD mode. Thus, due to our assumptions $N \geq 2K$ and $b_{\max}\leq K$, we have $q_e\leq N-b_{\max}$. Since at most $b_{\max}$ energy units are harvested in each slot, no energy unit is blocked. It can be concluded that the blocking probability is zero before the first transmission of a packet. After the first transmission, there remains at most $N-K$ energy units in the energy buffer regardless of the energy state at the moment of the packet arrival. Now, if the first transmission is unsuccessful, the packet is retransmitted and consumes $K$ energy units in each subsequent slot. Meanwhile, at most $b_{\max}$ energy units are harvested in each slot.
Due to the assumptions  $b_{\text{max}} \leq K$ and $N \geq 2K$, the number of energy units at the beginning of each slot does not exceed $N-K$. Thus, $p_b = 0$ after the first transmission of the packet. Therefore, $p_b=0$ in the second case, too. According to Lemma \ref{lem:pb}, $\alpha_{s}^{T}$ keeps $\text{Q}_\text{d}$ at the boundary of stability, and thus is throughput-optimal, i.e., $\overline{\alpha^T_s}=\alpha^T$. \end{proof} 

Here, we discuss the intuition why the throughput is maximized at the boundary of stability for both static and dynamic policies. At the boundary of stability, the data and energy arrival rates are balanced, i.e., all harvestable energy is consumed for successful transmission of all arrived packets at $\text{Q}_\text{d}$. Disturbing this balance leads to the blocking of the energy at $\text{Q}_\text{e}$ or instability of the data buffer, both lead to throughput degradation. This holds for both static and dynamic policies. The difference between the two policies is that the dynamic policy is able to control the buffer state more effectively, and thus can prevent $\text{Q}_\text{d}$ from being always backlogged at the boundary of stability.

\begin{remark}
\label{rem1}
It is worth noting that the throughput-optimal policy described in \eqref{eq:thop} is not unique. For example, when $b_{\max}=1$, any $e_{th}$ leads to the optimal throughput. In essence, adopting different $e_{th}$ in the range $[0, N-b_{\max}+1]$ leads to the same optimal throughput, albeit different idle time intervals at node R, i.e., the intervals node R waits till $q_e$ exceeds $e_{th}$. 

\end{remark}

\subsection{Delay-optimal dynamic policy}
\label{Delay-optimal dynamic policy}
According to \eqref{eq:delay}, the average transmission delay in a dynamic policy is $\tau(\alpha_s)=\frac{\overline{q_d(\alpha_s)}+1}{\lambda_S(\alpha_s)}$, where $\alpha_s$ denotes the applied policy at node R. 
Correspondingly, Lemma \ref{lem:lem5}  holds in the dynamic policy by replacing $\alpha$ with $\overline{\alpha}$. In Proposition \ref{pro:delay1}, we derive the delay-optimal dynamic policy, where $b_{\max}$ is set to $1$ for the time being.
\begin{proposition}
If $b_{\text{max}}=1$ and $\overline{D_R}|_{\overline{\alpha}\rightarrow 0}<\frac{1}{p^{det}_S}$, i.e., when cooperation is preferred to non-cooperation (see Lemma \ref{lem:lem5}), the delay-optimal policy $\alpha^{D}_s$ is given by
\begin{equation}
\alpha^{D}_s=\begin{cases}
1 \quad    ;\quad s=(0,N)\\
0       \quad    ;\quad \text{otherwise}. 
\label{eqpolicy1}
\end{cases}
\end{equation}
\label{pro:delay1}
\end{proposition}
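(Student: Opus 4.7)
My plan is to leverage Proposition 3 together with the structural bound $q_d\le 1$, and then argue optimality by splitting alternative policies into throughput-suboptimal and throughput-optimal classes.

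First I would verify that $\alpha^D_s$ is already throughput-optimal. When $b_{\max}=1$, the admissibility bound in Proposition 3 reads $e_{\text{th}}\le N-b_{\max}+1=N$, so the choice $e_{\text{th}}=N$ is legal and the corresponding instance of $\alpha^T_s$ in \eqref{eq:thop} coincides exactly with $\alpha^D_s$ in \eqref{eqpolicy1}. Hence $\alpha^D_s$ keeps $\text{Q}_\text{d}$ at the boundary of stability, $\overline{\alpha^D_s}=\alpha^T$, and $\lambda_S(\alpha^D_s)=\alpha^T(1-p^{det}_S)+p^{det}_S$ by \eqref{eq:lem32}; call this value $\lambda_S^{\max}$.

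Second, I would establish that $q_d\in\{0,1\}$ almost surely under $\alpha^D_s$: since $\alpha^D_s=0$ whenever $q_d\ge1$, no new packet is admitted into $\text{Q}_\text{d}$ while one is already present, and at most one packet can arrive per slot. Consequently $\overline{q_d}(\alpha^D_s)=\Pr[q_d=1]\le 1$, and by \eqref{eq:delay} we obtain an explicit value $\tau^\star:=\tau(\alpha^D_s)=\dfrac{\Pr[q_d=1]+1}{\lambda_S^{\max}}$.

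Third, I would prove $\tau(\alpha'_s)\ge\tau^\star$ for every other dynamic policy $\alpha'_s$ by splitting into two cases. \emph{Case A} ($\lambda_S(\alpha'_s)<\lambda_S^{\max}$): the dynamic extensions of Lemmas \ref{lem:stability}--\ref{lem:unstab} (noted at the start of Section \ref{Delay-optimal dynamic policy}) imply that any such policy has $\overline{\alpha'}\neq\alpha^T$; I would reuse the convexity argument behind Lemma \ref{lem:convex}, together with $\overline{D_R}<1/p^{det}_S$ (guaranteed by the hypothesis), to conclude that $\tau$ strictly increases as $\overline{\alpha'}$ moves away from $\alpha^T$. \emph{Case B} ($\lambda_S(\alpha'_s)=\lambda_S^{\max}$): here $\overline{\alpha'}=\alpha^T$ and $\text{Q}_\text{d}$ is at the boundary of stability under $\alpha'_s$, so it suffices to show $\overline{q_d}(\alpha'_s)\ge\Pr[q_d=1]_{\alpha^D_s}$. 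I would handle this with a sample-path interchange argument: if $\alpha'_s$ ever sets $\alpha'_s>0$ at a state $(0,q_e)$ with $q_e<N$ or at any $(1,q_e)$, we can construct a modified policy that defers the same decoding mass until the relay reaches $(0,N)$; since deferring decoding (i) cannot create any energy blocking under the assumption $b_{\max}=1$, $N\ge 2K$, and (ii) strictly shortens the interval during which $q_d=1$ for the packet in question, this modification preserves $\overline{\alpha}=\alpha^T$ while weakly decreasing $\overline{q_d}$. Iterating the interchange collapses $\alpha'_s$ to $\alpha^D_s$ without ever increasing $\overline{q_d}$.

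The main obstacle will be making the Case B interchange argument rigorous: I need to show that shifting decoding to $(0,N)$ does not disturb the stationary distribution of $\text{Q}_\text{e}$ in a way that raises $\overline{q_d}$ elsewhere. The cleanest route, I expect, is to couple the two policies on a common sequence of Bernoulli harvesting/detection outcomes and prove a pathwise inequality $q_d^{\text{modified}}(t)\le q_d^{\text{original}}(t)$ for all $t$, using the fact that $b_{\max}=1$ makes the energy buffer evolution monotone and the deferral of decoding only postpones (never cancels) the arrival of a data packet into $\text{Q}_\text{d}$.
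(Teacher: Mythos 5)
Your opening observations are correct and match the paper: $\alpha^D_s$ is the $e_{th}=N$ instance of the throughput-optimal family of Proposition~\ref{pro:thdy2}, it keeps $q_d\in\{0,1\}$, and $\lambda_S(\alpha^D_s)=\alpha^T(1-p^{det}_S)+p^{det}_S$. The difficulty is in your optimality argument, and Case~A contains a genuine gap. For a dynamic policy, $\tau$ is not a function of $\overline{\alpha}$ alone, so "reusing the convexity argument behind Lemma~\ref{lem:convex}" does not apply: that lemma concerns the \emph{static} family, where $\overline{D_R}(\alpha)$ is assumed convex increasing as a queueing fact about Bernoulli arrivals, and it says nothing about the delay of an arbitrary dynamic policy whose average DD probability happens to equal $\overline{\alpha}'$. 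Even granting convexity of some delay curve in $\overline{\alpha}$, convexity alone does not place the minimizer at the endpoint $\alpha^T$; you need monotonicity of the \emph{lower envelope} $\min\{\tau(\alpha'_s):\overline{\alpha'}=\overline{\alpha}\}$ over $[0,\alpha^T]$. The paper obtains exactly this by first showing that, for each fixed $\overline{\alpha}$, the minimizing policy is $\hat\alpha_s(\beta(\overline{\alpha}))$ (decode only at $(0,N)$ with probability $\beta$), for which the queueing delay is zero and every packet starts service with a full battery, so $\overline{D_R}=t_N$ is the minimum achievable; this gives the explicit envelope $\tau=\bigl(\overline{\alpha}(1-p^{det}_S)t_N+1\bigr)/\bigl(\overline{\alpha}(1-p^{det}_S)+p^{det}_S\bigr)$, which is decreasing in $\overline{\alpha}$ precisely because $t_N<1/p^{det}_S$ — and that inequality is where the hypothesis $\overline{D_R}|_{\overline{\alpha}\to 0}<1/p^{det}_S$ enters (since $\overline{D_R}|_{\overline{\alpha}\to 0}=t_N$). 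Your sketch never uses the hypothesis in this quantitative way, yet without it the claim is false (non-cooperation would win), so the gap is substantive rather than cosmetic.

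Your Case~B coupling/interchange idea is plausible but, as you acknowledge, is the hard part, and it is unnecessary: once you fix $\overline{\alpha}=\alpha^T$ (so $\lambda_S$ is fixed), Little's law gives $\overline{q_d}=\overline{D_R}\,\overline{\alpha}(1-p^{det}_S)$, and $\overline{D_R}\ge t_N$ for \emph{any} policy because queueing delay is nonnegative and no packet can be served faster than one that begins with a full energy buffer and enjoys EH-only slots until success. This replaces the pathwise argument with a one-line bound and also supplies the missing Case~A envelope. An additional technical point you would owe under your route: continuity/attainability of every $\overline{\alpha}\in[0,\alpha^T]$ within the candidate family (the paper proves $d\overline{\alpha}/d\beta>0$ in Appendix~\ref{app:two} for this purpose), which your case split implicitly relies on when identifying the throughput-optimal class.
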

\begin{proof}
Consider the following policy
\begin{equation}
\hat\alpha_s(\beta)=\begin{cases}
\beta \quad    ;\quad s=(0,N)\\
0  \quad    ;\quad \text{otherwise}. 
\label{eqpolicy2}
\end{cases}
\end{equation}

In Appendix \ref{app:two}, it is proved that $\overline{\alpha}$ corresponding to the above policy is an increasing continuous function of $\beta$. Therefore, since for extreme values $\beta=0$ and $\beta=1$, $\overline{\alpha}$ is respectively equal to $0$ and $\alpha^{T}$ according to Proposition~\ref{pro:thdy2}, any value of $\overline{\alpha} \in [0, \alpha^{T}]$ is achieved by a unique $\beta \in[0,1]$. 
 In the following, we denote this unique $\beta$  by $\beta=\beta(\overline \alpha)$.

Now, we prove that among all policies with the same average DD probability $\overline{\alpha}$, $\tau(\alpha_s)$ attains its minimum value under $\hat\alpha_s(\beta)$ where $\beta=
\beta(\overline \alpha)$. For the mentioned policies, $\lambda_s$ is fixed according to \eqref{eq:lem32}. Therefore, for minimizing  $\tau(\alpha_s)=\frac{\overline{q_d}(\alpha_s)+1}{\lambda_s(\alpha_s)}$, it is enough to find a policy which minimizes the nominator, i.e., $\overline{q_d}$. According to Little's law and \eqref{eqlamid}, we have
\begin{equation}
\overline{q_d}=\overline{D_R} \overline{\alpha} (1-p^{det}_{S}).
\label{eqqddr}
\end{equation}
Then, for a fixed $\overline{\alpha}$, the minimization of $\overline{q_d}$ is equivalent to the minimization of $\overline{D_R}$, which is the average queueing delay plus the average transmission time of the packets. 
The average queueing delay is zero under $\hat\alpha_s(\beta)$ since the packets are received only when $q_d=0$. Moreover, under $\hat\alpha_s(\beta)$, all packets observe full energy buffer upon their arrival. Thus, they all have the same average transmission time which we denote by $t_N$. As a result, $\overline{D_R}=t_N$. Now observe that $t_N$ is the minimum transmission time that can be experienced by a packet at node R because the energy buffer is full when node R starts to transmit the packet, and node R remains in the EH mode until the packet is successfully transmitted. Thus, node R has the maximum possible energy for initial transmission as well as probable retransmissions of the packet.  
 Putting all these together, we find that $\overline{D_R}$ obtains its minimum value under $\hat\alpha_s(\beta)$, and thus $\hat\alpha_s(\beta)$ with $\beta=\beta(\overline{\alpha})$ is the delay-optimal policy at a given $\overline{\alpha}$. Consequently, the minimum value of the nominator of $\tau$, i.e., $\overline{q_d}+1$, is given by
\begin{equation}
\overline{q_d}+1=t_N \overline{\alpha} (1-p^{det}_{S})+1.
\label{eqtn}
\end{equation}

So far, we have determined the structure of the delay-optimal policy at a given $\overline{\alpha}$, i.e., $\hat{\alpha}_s(\beta(\overline{\alpha}))$. Now, it remains to find the optimal $\overline \alpha$  which is equivalent to finding the optimal $\beta=\beta(\overline \alpha)$. 
In the following, we show that $\tau(\hat{\alpha}_{s}(\beta))$ is a decreasing function of $\beta=\beta(\overline \alpha)$. Then, we conclude that the optimal $\beta$ is $1$ and the optimal $\overline \alpha$ is $\alpha^{T}$. Thus, the proposition is proved. 

Using~\eqref{eq:lem31} and \eqref{eqtn} we have
\begin{equation}
\tau=\frac{\overline{\alpha}(1-p^{det}_{S})t_N+1}{\overline{\alpha}(1-p^{det}_{S})+p^{det}_{S}},
\label{eqtau1}
\end{equation}
and then 
\begin{equation}
 \frac{d\tau}{d\beta}=\frac{(1-p^{det}_{S})(p^{det}_S t_N-1)}{(\overline{\alpha}(1-p^{det}_{S})+p^{det}_{S})^2} ~\frac{d\overline{\alpha}}{d\beta}\,.
\label{eqdtau2}
\end{equation}
As mentioned before, it is proven in Appendix~\ref{app:two} that $d\overline \alpha/d\beta >0$. Then to verify that $d\tau/d\beta< 0$, we need to show that $t_N<\frac{1}{p_S^{det}}$. 
   
 According to~\eqref{eqlamid}, when $\alpha$ or $\overline{\alpha}$ tends to $0$, the average interarrival time of the packets at node R, which is $1/\lambda_{id}$, goes to infinity almost surely. In this case, the data buffer becomes empty and the energy buffer becomes fully backlogged before arrival of a new packet at $\text{Q}_\text{d}$. Thus, the packet is transmitted immediately upon its arrival at node R. Consequently, $\overline{D}_R|_{\overline{\alpha}\rightarrow 0}=t_N$. Since by our assumption $\overline{D_R}|_{\overline{\alpha} \rightarrow 0}<\frac{1}{p^{det}_S}$,
we have $t_N<\frac{1}{p^{det}_S}$. It is worth noting that regarding \eqref{eq:thop} and \eqref{eqpolicy1}, we deduce that for $b_{\max}=1$, the delay-optimal dynamic policy is also throughput-optimal.

\end{proof}

\begin{table*}[!t]
\begin{center}
\caption{Throughput-optimal and Delay-optimal Static and Dynamic Policies}
\scalebox{0.9}{
    \begin{tabular}{ | c | c | c |  }
    \hline
    Policy  & Throughput-optimal & Delay-optimal  \\ \hline
     Static & $ \alpha^{T} = \frac{E(\Gamma) p^{det}_{R}}{E(\Gamma) p^{det}_{R} +K(1-p^{det}_S)}$  & $\alpha^{D}$ is derived in Algorithm \ref{algorithm1} \\ \hline
     Dynamic &  

$\begin{aligned}
\alpha^{T}_{s}=\begin{cases}
1~;\quad   s\in\{(0,q_e)| ~q_e > e_{th}\} ,\\
0~;  \quad   s\in\{(0,q_e)| ~q_e \leq e_{th}\},\\
0~;  \quad  s\in\{(1,q_e)|~  0 \leq q_e \leq N \}.
\end{cases} \end{aligned} $ & $\begin{aligned} \alpha_s^{D}=\begin{cases}
1\quad ;\quad   s\in\{(0,q_e)| ~q_e > e_{th}\} ,\\
\beta \quad; \quad   s\in\{(0,q_e)| ~q_e= e_{th}\} ,\\
0 \quad  ; \quad \text{Otherwise},
\end{cases}
\end{aligned}
$\\
& $e_{th}\leq N-b_{\max}+1$ & $e_{th}$ and $\beta$ are derived in Algorithm \ref{alg:dyn}
   \\ \hline
    \end{tabular}}
    \label{tab:table2}

\end{center} 
\end{table*}

The above proposition describes the delay-optimal policy if $b_{\max}=1$, which is a threshold-based policy. 
The structure of the delay-optimal policy when $b_{\max}>1$ is more complicated. Nevertheless, by giving intuitions, we derive some properties of the delay-optimal policy. Similar to the case $b_{\max}=1$, for a given $\overline\alpha$, the throughput, $\lambda_S$, is constant. Thus, in this case, a delay-optimal policy should minimize $\overline{q_d}$ (see \eqref{eq:delay}). Consequently, according to \eqref{eqqddr}, it minimizes $\overline{D_R}$, i.e., the average queueing delay plus the average transmission time of the packet. The average transmission time of a packet will be less if no other packet is received during its transmission. Because, in this case all slots are dedicated to the EH mode, and thus, the maximum possible energy is harvested for transmission of the packet. Moreover, no packet is queued at node R. Thus, the queueing delay is zero. Therefore, it is deduced that for a given $\overline\alpha$, a delay-optimal policy should receive a packet after successfully transmitting the current packet, i.e., it keeps at most one packet at node R. Consequently, the delay-optimal policy, i.e., the policy related to the optimal $\overline{\alpha}$, has the same property. Now we focus on the policies with the structure as in \eqref{eq:thop} since these policies keep at most one packet at node R.

Our numerical results show that in a policy with a structure similar to \eqref{eq:thop} and for a typical EH profile (uniformly distributed between $0$ and $b_{\max}$), $\overline{\alpha}$ or equivalently the throughput, $\lambda_S$, increases by decreasing $e_{th}$ from $N$ to $N-b_{\max}+1$. This is intuitively due to the fact that by decreasing $e_{th}$, node R will be in the DD mode in higher energy levels, and thus, less energy units are blocked. Consequently, by decreasing $p_b$, $\overline \alpha$ increases according to \eqref{eq:alphat}. On the other hand, by increasing $\overline{\alpha}$, the arrival rate to $Q_d$ and as a result the average queue length, $\overline{q_d}$, increases. Since both the nominator ($\overline{q_d}$) and denominator ($\lambda_S$) in \eqref{eq:delay} increase, there is a chance that $\tau$ decreases for some $N-b_{\max}+1\leq e_{th}\leq N$. This is unlike the case $b_{\max}=1$ where the throughput is constant for every $e_{th}$. Moreover, if $e_{th}$ is chosen less than $N-b_{\max}+1$, $\overline{\alpha}$ and as a result, $\lambda_S$ remain constant according to Proposition \ref{pro:thdy2}. However, $\overline{D_R}$ and thus, $\overline{q_d}$ in \eqref{eqqddr} does not decrease since transmission of some of the packets begin at lower energy levels. Therefore, $\tau$ does not decrease. Based on this observations, we guess the structure of the delay-optimal policy for the case $b_{\max}>1$ in the following conjecture.  

\begin{conjecture} 
\label{conj:delaydy}
If the cooperation is helpful, i.e., $\overline{D_R}|_{\overline{\alpha}\rightarrow 0}<\frac{1}{p^{det}_S}$, the delay-optimal policy, $\alpha_s^{D}$, is given by
\begin{equation}
\alpha_s^{D}=\begin{cases}
1\quad ;\quad   s\in\{(0,q_e)| ~q_e > e_{th}\} ,\\
\beta \quad; \quad   s\in\{(0,q_e)| ~q_e= e_{th}\} ,\\
0 \quad  ; \quad \text{Otherwise},
\end{cases}
\label{eqstruc}
\end{equation}

for some $\beta \in [0,1]$ and $e_{th} \in \{N-b_{\max}+1,...,N\}$.
\end{conjecture}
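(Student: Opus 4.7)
The plan is to prove Conjecture~\ref{conj:delaydy} in three stages, building on the argument made in the paragraph preceding its statement. First, I would formalize the claim that, among all dynamic policies, the delay-optimal one must keep at most one packet in $\text{Q}_\text{d}$, since admitting a second packet while the first is being transmitted strictly lengthens both the average queueing delay and the transmission time of the first packet (fewer slots are available for pure EH and the packet sees a lower initial energy state). This forces $\alpha_s^D = 0$ for every $s = (1, q_e)$, reducing the search to specifying $\alpha_s$ only on the states $(0, q_e)$.

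Second, for a fixed average DD probability $\overline{\alpha}$, I would show that among one-packet policies the threshold form in \eqref{eqstruc} minimizes $\overline{D_R}$. The key lemma is an exchange argument: given two one-packet policies sharing the same $\overline{\alpha}$ but differing only by shifting a small amount of DD probability from a state $(0, q_e')$ to a state $(0, q_e'')$ with $q_e'' > q_e'$, the latter attains no larger $\overline{D_R}$. The intuition is that the expected transmission time $t_{q_e}$ (the number of slots from arrival to successful transmission of a packet that finds $q_e$ energy units upon arrival) is nonincreasing in $q_e$, so delaying acceptance until the buffer is fuller shortens each packet's stay at node R. Iterating this exchange compresses all DD mass toward the highest energy states, yielding a threshold $e_{th}$ with randomization $\beta$ at the boundary.

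Third, I would establish the lower bound $e_{th} \geq N - b_{\max} + 1$. If $e_{th} < N - b_{\max} + 1$, then during the EH phase the buffer can reach a state $q_e > N - b_{\max}$ at which an arriving burst is partially blocked, so $p_b > 0$. Raising $e_{th}$ to $N - b_{\max} + 1$ eliminates such blocking by exactly the same case analysis as in the proof of Proposition~\ref{pro:thdy2}, which increases the rate at which accepted packets find a ``full-enough'' buffer without changing their individual transmission times. Invoking the decomposition in \eqref{eqtau1}, this reduces $\tau$. Combining both directions confines the optimizer to $(e_{th}, \beta) \in \{N - b_{\max} + 1, \ldots, N\} \times [0, 1]$, a finite-dimensional domain over which one enumerates $e_{th}$ and applies a one-dimensional search in $\beta$.

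The principal obstacle is stage two: the exchange argument is delicate because shifting policy mass between states $(0, q_e')$ and $(0, q_e'')$ simultaneously perturbs the stationary distribution of the underlying non-homogeneous QBD, so the ``swap'' is not a rearrangement of independent decisions and the individual transmission-time intuition is not a direct proof. A rigorous treatment would likely require either coupling two parallel QBD chains on a common sample path (matching source outcomes, energy arrivals, and R--D outcomes slot-by-slot) and establishing pathwise dominance of $\text{Q}_\text{d}$ busy periods, or an average-cost MDP structural approach that verifies submodularity of the relative value function in the pair (energy level, DD action). The paper's presenting this as a conjecture supported only by numerical experiments suggests that neither route is routine, and a formal proof would likely need to exploit the specific combinatorial structure of $\mathbf{M}$, $\mathbf{T}$, and $\mathbf{B}$ rather than relying on monotonicity alone.
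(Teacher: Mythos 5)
The statement you are trying to prove is explicitly a \emph{conjecture}: the paper offers no proof, only the heuristic paragraph preceding \eqref{eqstruc} plus numerical evidence, so there is no complete argument to compare yours against. Your stages one and two track that heuristic closely (at-most-one-packet, then a threshold-in-energy structure for fixed $\overline{\alpha}$), and you are right to flag the exchange argument in stage two as the genuine obstruction --- the perturbation of the stationary distribution of the non-homogeneous QBD is exactly why neither you nor the authors can close the argument, and your suggested remedies (pathwise coupling or an MDP submodularity argument) are the natural candidates. To that extent the proposal is an honest plan with an acknowledged gap rather than a proof, which is consistent with the status of the statement.

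However, stage three contains a concrete error: you have the direction of the blocking argument reversed. By the case analysis in the proof of Proposition \ref{pro:thdy2}, choosing $e_{th} \leq N - b_{\max} + 1$ guarantees $p_b = 0$, because while node R waits in the EH mode it has $q_e < e_{th} \leq N - b_{\max} + 1$, i.e.\ $q_e \leq N - b_{\max}$, and at most $b_{\max}$ units arrive per slot. Blocking arises for \emph{large} $e_{th}$ (close to $N$), where the relay keeps harvesting at energy levels above $N - b_{\max}$; that is why $\overline{\alpha}$ and $\lambda_S$ \emph{decrease} as $e_{th}$ grows from $N - b_{\max}+1$ to $N$ (Fig.~\ref{fig:combine}(a)), not the other way around. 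The correct reason to exclude $e_{th} < N - b_{\max} + 1$ is the one the paper gives: below that point the throughput is already saturated at $\alpha^{T}$ and cannot increase further (Proposition \ref{pro:thdy2}), while $\overline{D_R}$ can only get worse because packets are admitted at lower energy levels; hence $\tau = (\overline{q_d}+1)/\lambda_S$ cannot decrease. The interesting trade-off inside the range $\{N-b_{\max}+1,\dots,N\}$ is precisely that raising $e_{th}$ shortens transmission times but also lowers throughput via blocking --- your stage three, as written, would erase that trade-off and incorrectly suggest $e_{th} = N - b_{\max}+1$ is always optimal, contradicting Table \ref{tab:table4}.
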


It is worth mentioning that the DD probability in $e_{th}$ in \eqref{eqstruc} is considered as a   parameter, i.e., $\beta$, so that all the DD probabilities in $[0,\alpha^T]$ can be obtained continuously.

We now propose an algorithm to find the optimal values of $\beta$ and $e_{th}$. Consider the Markov chain related to $\alpha^{D}_s$ in \eqref{eqstruc}. The corresponding transition probabilities are written as in \eqref{eqtp1} and \eqref{eqtp2} by replacing $\alpha_s$ with $\alpha^{D}_s$.
Then, $\overline{\alpha}$ and $\overline{q_d}$ are derived as functions of $e_{th}$ and $\beta$:
\begin{equation}
\overline{\alpha}(e_{th},\beta)= \sum^{N}_{j=e_{th}+1}\pi_{(0,j)}+\beta \pi_{(0,e_{th})},
\label{eqalphabar}
\end{equation}
\begin{equation}
\begin{aligned}
&\overline{q_d}(e_{th},\beta) =\\
 &\frac{1}{2}(1-p^{det}_S)\left(\sum^{N}_{j=e_{th}+1}\pi_{(0,j)}+\beta \pi_{(0,e_{th})}\right)+\sum^{N}_{j=0} \pi_{(1,j)},
\label{eq:qd}
\end{aligned}
\end{equation}
where $\pi_{(i,j)}$ is the steady state probability of state $(i,j)$. In \eqref{eqalphabar}, $\overline{\alpha}(e_{th},\beta)$ is the expected value of $\alpha$ under the policy \eqref{eqstruc}.
In \eqref{eq:qd}, the first term accounts for the slots in which $Q_d$ is empty in the first subslot but switches to the DD mode and becomes backlogged in the second subslot. Likewise, the second term stands for the slots in which $\text{Q}_\text{d}$ is backlogged from the beginning of the slot. To find optimal $e_{th}$ and $\beta$, the following minimization is solved 
\begin{equation}
 \begin{aligned}
 & \underset{\substack{e_{th}\in \{N-b_{\text{max}}+1,.., N\},\\ \beta \in(0,1]  }} {\text{min}}
  \tau(e_{th},\beta)=\frac{\overline{q_{d}}(e_{th},\beta)+1}{(1-p^{det}_{S})\overline{\alpha}(e_{th},\beta)+p^{det}_S}~. \\
 \end{aligned}
 \label{eqminfinal}
\end{equation}

We prove in Appendix \ref{app:convex} that $\tau(e_{th},\beta)$ is a convex function of $\beta$. Then, we can use the golden section search method in Algorithm \ref{alg:dyn} to find the optimum value of $\beta$ for a given $e_{th}$. Meanwhile, the optimum value of $e_{th}$ is derived by exhausive search in $\{N-b_{\max}+1,...,N\}$. Table \ref{tab:table2} summarizes the throughput-optimal and delay-optimal static and dynamic policies. 

\begin{algorithm} [t!]
\small{
  \caption{Delay-optimal dynamic policy}
  \begin{algorithmic}[1]   
   \STATE $e_{th}=N$ and $\tau_{\min}=1/p^{det}_S$ (i.e., non-cooperation delay)
    \WHILE{ $e_{th} \geq N-b_{\text{max}}+1$} 
     \STATE $\beta=0$ and $\beta'=1$
    \WHILE{ ( $\frac{(\beta'-\beta)}{\beta'} \geq \epsilon$ ) } 
    \STATE $\beta_1 = \beta+0.382(\beta'-\beta)$
    \STATE $\beta'_1 = \beta'-0.382(\beta'-\beta)$
    \STATE Compute $\tau(e_{th},\beta_1)$ and $\tau(e_{th},\beta'_1)$ from QBD process
    \STATE if $\tau(e_{th},\beta_1)<\tau(e_{th},\beta'_1)$ then $\beta' = \beta'_1$
    \STATE if $\tau(e_{th},\beta_1) \geq \tau(e_{th},\beta'_1)$ then $\beta = \beta_1$ 
    \ENDWHILE
    \IF{$\tau(e_{th},\beta) \leq \tau_{\min}$ }
    \STATE $\tau_{\min}=\tau(e_{th},\beta)$
    \ENDIF
    \STATE $e_{th} = e_{th}-1$
     \ENDWHILE
  \end{algorithmic}
  \label{alg:dyn} 
}
\end{algorithm}

It is worth mentioning that if $N$ is large enough compared to $b_{\text{max}}$, the average transmission time of the packets that are received in energy states between $N-b_{max}+1$ and $N$ do not vary considerably (i.e., they are approximately equal to $t_N$). Therefore, \eqref{eqtau1} holds and thus, in order to minimize $\tau$, $e_{th}$ should be chosen such that $\overline{\alpha}$ or equivalently the throughput is maximized, i.e., $e_{th}=N-b_{max}+1$, according to Proposition \ref{pro:thdy2}. Thus, in this case, the throughput-optimal policy which is equivalent to $e_{th}=N-b_{max}+1$ and $\beta=1$ in \eqref{eqstruc}, is a good approximation for the delay-optimal policy. This has been shown by numerical results in the next section.

\section{Numerical Results}
In this section, we validate our analysis through numerical simulations. We also compare optimal dynamic and static policies to investigate the advantages of dynamic policies over the static ones in different conditions. The energy arrival process is modeled by a discrete uniform process in the interval $[0,b_{\text{max}}]$. The typical values of the parameters are listed in Table \ref{tab:table3}.
\begin{table}[!t]
\begin{center}
\caption{Typical Values of Parameters}
\scalebox{0.85}{
    \begin{tabular}{ | c | c | c | c | }
    \hline
     parameter & Typical value & parameter & Typical value \\ \hline
     $p^{det}_{S}$ &   $0.3$ & $p^{det}_{R}$ &   $0.9$ \\ \hline
     $b_{\text{max}}$  &   $5$ &  $K$          &   $10$    \\ \hline
     $N$           &   $100$  &  $\epsilon$ in Algorithm \ref{algorithm1} & $0.01$ \\ \hline           
    \end{tabular}}
    \label{tab:table3}

\end{center} 
\end{table} 

\begin{figure}[!t]
\begin{center}
\includegraphics[height=120pt,width=\columnwidth]{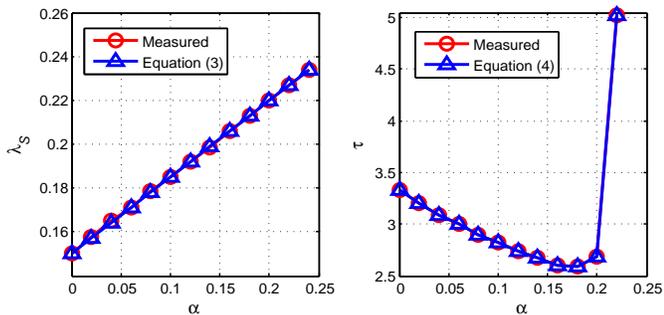}
\caption{Average transmission delay and throughput of the source versus DD probability ($\alpha$)}
\label{fig:confirm}
\end{center}
\end{figure}
In Fig. \ref{fig:confirm}(a) and (b), we validate \eqref{eq:thr} and \eqref{eq:delay} by comparing them with the average delay and throughput obtained by simulations. The figures show that the analysis matches with the simulations very well. Moreover, in Fig. \ref{fig:confirm2}, we verify our assumption in Section \ref{delay-optimal static policy} that the average delay at node R, i.e., $\overline {d_R}$, is a convex and increasing function of $\alpha$.
\begin{figure}[!t]
\begin{center}
\includegraphics[height=150pt,width=0.9\columnwidth]{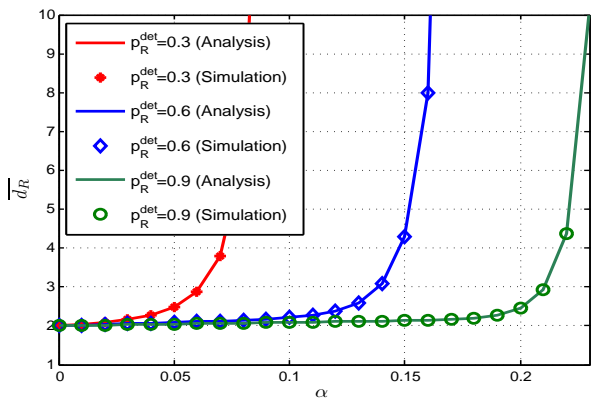}
\caption{Average transmission delay at node R versus DD probability ($\alpha$).}
\label{fig:confirm2}
\end{center}
\end{figure}

\begin{figure}
\centering
{\includegraphics[height=340pt,width=\columnwidth]{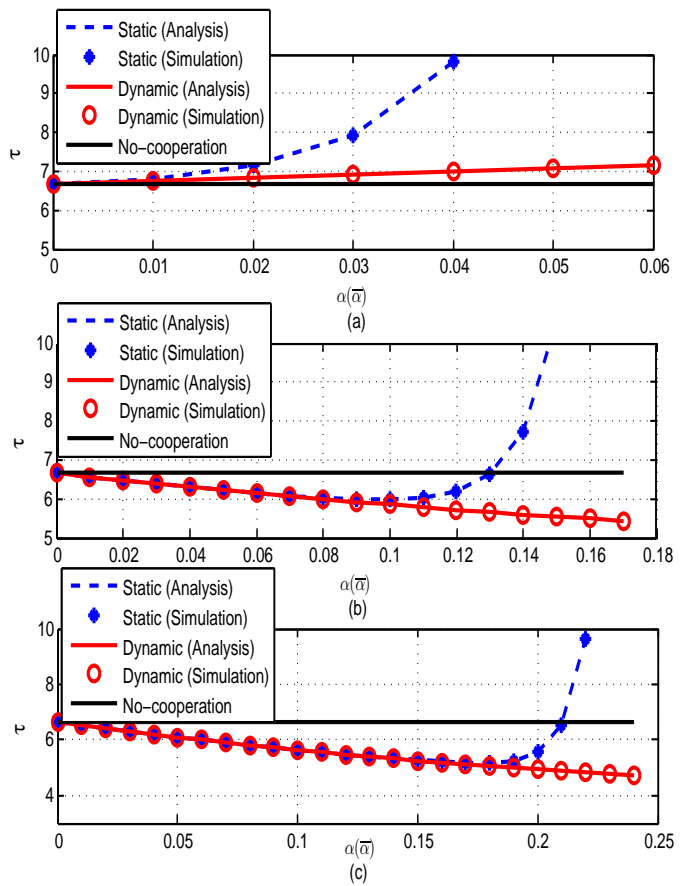}}
\\
\caption{Average transmission delay of source packets vs. DD probability at the relay node (a) $p^{det}_R=0.2$ (b) $p^{det}_R=0.6$ (c) $p^{det}_R=0.9$}
\label{fig:alpha}
\end{figure}

In Fig. \ref{fig:alpha}, the average transmission delay of source packets, $\tau$, is plotted versus $\alpha$ of the static policy, and $\overline{\alpha}$ of the dynamic policy, respectively. Note that $\alpha$ can be directly translated to throughput according to \eqref{eq:lem32}. The figures show that $\tau$ is a convex function of $\alpha$ or equivalently the data arrival rate at node R (see \eqref{eqlamid}).
Moreover, as shown in Fig. \ref{fig:alpha}(a), when $p^{det}_R$ is low, the number of retransmissions of a packet at node R is so large that the cooperation does not improve the average transmission delay. However, if $p^{det}_R$ is increased to $0.6$ and $0.9$ as in Fig. \ref{fig:alpha}(b) and (c), cooperation reduces the average transmission delay for a certain range of DD probabilities. In these cases, when $\alpha$ is not very large ($\alpha<0.1$ in Fig. \ref{fig:alpha}(b) and $\alpha<0.19$ in Fig. \ref{fig:alpha}(c)), the performance of static and dynamic policies are the same. This is due to the fact that no queue is yet formed at node R in the static policy. Thus, its performance is similar to the optimal dynamic policy, which accepts at most one packet at node R according to conjecture \ref{conj:delaydy}. In addition, in this case, the delay decreases by increasing $\alpha$, because more packets are transmitted through the better physical channel ($p^{det}_R$). However, when $\alpha$ becomes too large ($\alpha>0.1$ in Fig. \ref{fig:alpha}(b) and $\alpha>0.19$ in Fig. \ref{fig:alpha}(c)), the queueing delay at node R of the static policy becomes noticeable, so that an explosive growth in the average transmission delay is observed. This is unlike the optimal dynamic policy that serves at most one packet at a time and thus, prevents a queue build-up at node R. 
 It is worth noting that in Fig. \ref{fig:alpha}(b) and (c), the throughput-optimal dynamic policy and the delay-optimal dynamic policy are the same. However, when static policies are deployed, the throughput and delay optimality cannot be attained at the same time.  

\begin{figure}
\centering
{\includegraphics[height=180pt,width=0.9\columnwidth]{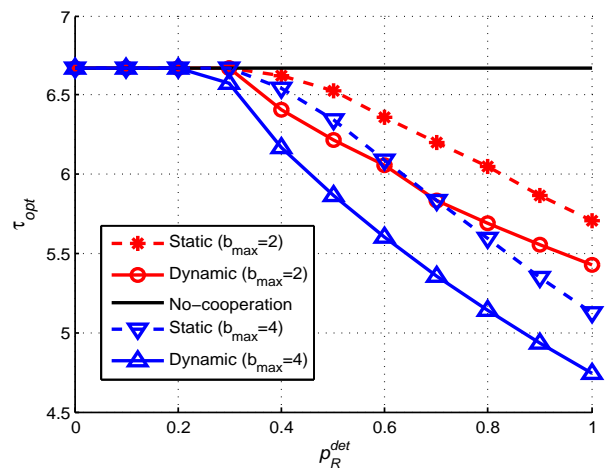}}
\\
\caption{Optimum average transmission delay of source packets vs. detection probability of R-D channel}
\label{fig:det}
\end{figure}

In Fig. \ref{fig:det}, the optimal average transmission delay of source packets is plotted versus $p^{det}_R$. When $p^{det}_R$ is small, both static and dynamic policies decide not to cooperate due to large transmission delay at node R. However, when $p^{det}_R$ is large, cooperation leads to much better performance than no-cooperation. Moreover, the increase in the energy arrival rate at node R leads to a decrease in the average transmission delay as expected.

\begin{figure}
\centering
{\includegraphics[height=180pt,width=0.9\columnwidth]{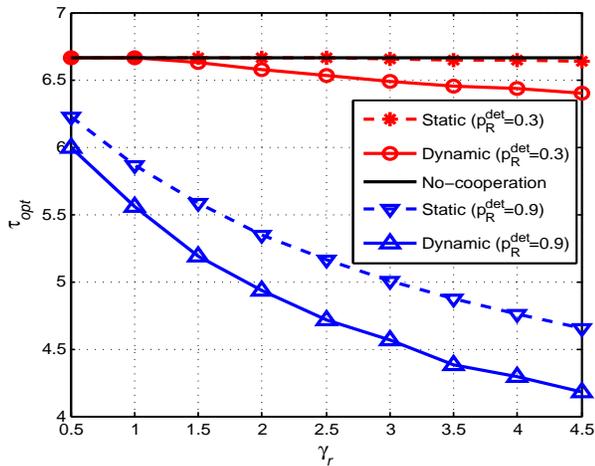}}
\\
\caption{Optimum average transmission delay of source packets vs. energy arrival rate at node R}
\label{fig:energy}
\end{figure} 

In Fig. \ref{fig:energy}, the optimal average transmission delay is plotted versus the energy arrival rate at node R. As the energy arrival rate increases, the average transmission time of a packet decreases. Thus, the improvement over no-cooperation scenario is more noticeable. Moreover, when $p^{det}_R$ is large, the improvement of dynamic policy over the static one is more significant. This is due to the lower average transmission time at node R, which leads to a more efficient dynamic policy.
 
\begin{figure}
\centering
{\includegraphics[height=180pt,width=0.9\columnwidth]{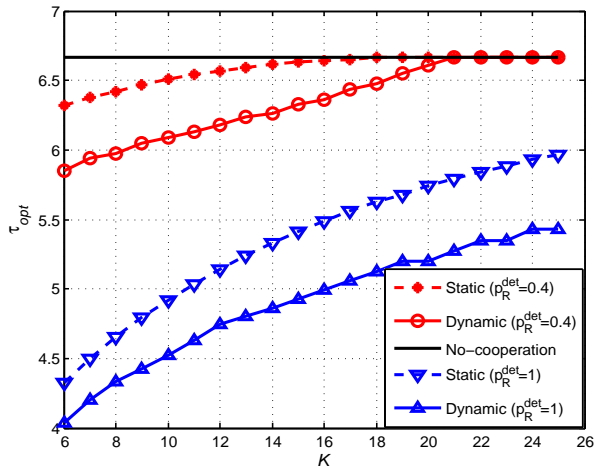}}
\\
\caption{Optimum average transmission delay of source packets vs. the number of energy units used at node R for transmission of a packet}
\label{fig:k}
\end{figure} 

In Fig. \ref{fig:k}, the optimal average transmission delay is plotted versus the number of energy units that are needed for transmission of a single packet, i.e., $K$. As shown in Fig. \ref{fig:k}, the average transmission delay increases with $K$. This is because a packet must wait for longer time until enough energy is accumulated at the energy buffer. Moreover, when $p^{det}_R$ is small and $K$ is large, non-cooperation is optimal for both the static and dynamic policies. This is because of both the high retransmission rate at node R (due to small $p_R^{det}$) and long waiting time for each retransmission (due to large $K$).

\begin{figure}
\raggedleft
{\includegraphics[height=160pt,width=\columnwidth]{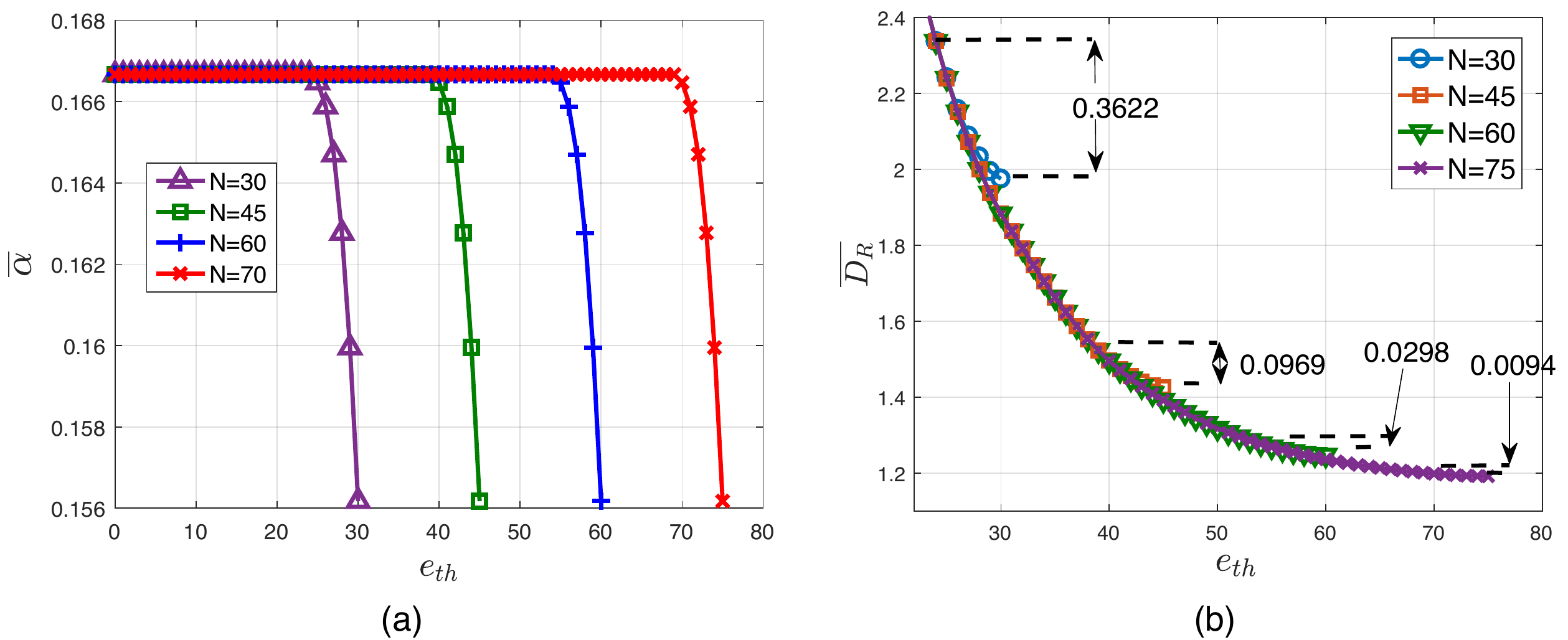}}
\\
\caption{(a) Average DD probability, $\overline{\alpha}$, vs. $e_{th}$ in \eqref{eq:thop},  (b) Average delay at node R, $\overline{D_R}$, vs. $e_{th}$ in \eqref{eqstruc} ($p^{det}_R=0.6,p^{det}_S=0.3,K=15,b_{max}=7$)}
\label{fig:combine}
\end{figure} 


In Fig. \ref{fig:combine}(a), the average DD probability, $\overline{\alpha}$, corresponding to $\alpha^T_s$ in \eqref{eq:thop} is plotted versus the energy threshold  to switch to the DD mode in $\alpha^T_s$, i.e., $e_{th}$. As it can be seen, for all values of $N$, $\overline{\alpha}$ and consequently the throughput increase by decreasing $e_{th}$ from $N$ to $N-b_{\max}+1$. Then it remains constant for $e_{th}\leq N-b_{\max}+1$ as proved in Proposition \ref{pro:thdy2}. Note that since the optimal throughput, i.e., $\alpha^T$ in \eqref{eq:alpha}, is independent of $N$, the maximum throughput for all values of $N$ is the same. 
In Fig. \ref{fig:combine}(b), the average delay at node R, i.e., $\overline{D_R}$, corresponding to $\alpha^D_s$ in \eqref{eqstruc} is plotted versus $e_{th}$ when $\beta=1$. It can be observed that $\overline{D_R}$ decreases as $e_{th}$ increases. This is due to the fact that for larger $e_{th}$, the packets are received in higher energy levels which leads to less average transmission time of the packets. Moreover, the difference between $\overline{D_R}$ at $e_{th}=N$ and $e_{th}=N-b_{\max}+1$ is shown in Fig. \ref{fig:combine}(b) for different values of $N$. As it can be observed, this difference becomes less as $N$ increases. In sufficiently large $N$, the difference is negligible which means that $\overline{D_R}$ remains almost constant for $e_{th} \in \{N-b_{\max}+1,...,N\}$. In this case, it can be inferred that the average transmission time of the packets does not change considerably when they start their transmission at $q_e\in \{N-b_{\max}+1,..., N \}$, i.e., they are approximately equal to $t_N$. Thus, \eqref{eqtau1} holds and similar to the proof of Proposition \ref{pro:delay1}, $\tau$ is minimized when $\overline{\alpha}$ and consequently the throughput have their maximum values. This happens when $e_{th}=N-b_{\max}+1$ and $\beta=1$ in \eqref{eqstruc}, according to Proposition \ref{pro:thdy2}. Therefore, in this case, the delay-optimal policy is also throughput-optimal.

\begin{table}[!t]
\begin{center}
\caption{Optimal Energy Threshold in $\alpha^D_s$ in \eqref{eqstruc} ($p^{det}_S=0.3$, $N=45$, $K=15$, $b_{\max}=7$),\protect\linebreak
(Energy Threshold in Throughput-optimal Policy, $\alpha^T_s$ in \eqref{eq:thop}, is $e_{th}\leq 39$)}
\scalebox{0.9}{
    \begin{tabular}{ | c | c | c || c | c | c |}
    \hline
     $p^{det}_R$ & $e_{th}$  & $\tau$ & $p^{det}_R$  & $e_{th}$ & $\tau$ \\ \hline
     $0.45$ &   $45$ & $3.1912$ &   $0.7$ & $41$ & $2.6333$ \\ \hline
     $0.5$ &   $44$ & $3.0459$ &   $0.8$ & $40$ & $2.4942$ \\ \hline
     $0.6$ &   $42$ & $2.8119$ &   $0.9$ & $39$ & $2.3825$ \\ \hline           
    \end{tabular}}
    \label{tab:table4}

\end{center} 
\end{table} 

In Table \ref{tab:table4}, we list $e_{th}$ in \eqref{eqstruc} for the delay-optimal policy in different detection probabilities at node R. It can be seen that $e_{th}$ decreases from $N=45$ to $N-b_{\max}+1=39$ when $p_R^{det}$ increases.
Also note that for any $p^{det}_R$, any $e_{th}\leq N-b_{\max}+1=39$ is throughput-optimal in \eqref{eq:thop}, according to Proposition \ref{pro:thdy2}.

\section{Conclusion}
We have derived the throughput-optimal and delay-optimal static and dynamic TS policies in a cooperative network with a WEH relay. We have proved that the throughput-optimal policy is obtained when the data buffer at the relay is at the boundary of stability, where the blocking probability at energy buffer is zero and all the harvested energy is consumed for transmission of data packets. Unlike the static policy, we have proved that the throughput-optimal dynamic policy has a threshold-based structure that keeps at most one data packet at the relay node. The delay-optimal static policy has been derived by analyzing a underlying QBD process. Likewise, we have shown that the delay optimal dynamic policy is threshold-based. In particular, the policy allows at most one data packet to be backlogged at node R. Our analysis has been validated by extensive simulations, through which have also investigated the performance of the proposed policies under various settings of system parameters.
\appendices
\section{Transition Probabilities of the Underlying QBD Process}
\label{ap:qbd}

Define $\mathbf{M}$ to be the transition matrix of the energy buffer state in the second subslot given that the data buffer is backlogged at the end of the first subslot. Then, $\mathbf{M}_{ij}$ is written as
\begin{equation}
\mathbf{M}_{ij}=\begin{cases}
1  \quad ; \quad\quad ~ ( i<K , j=i )~ \text{or}~ ( i\geq K, j=i-K) , \\
0  \quad  ;\quad\quad\quad\quad\quad\quad\quad\quad \text{otherwise.} \\
\end{cases}
\label{eqM2}
\end{equation}
The above equation indicates that the transmission occurs in the second subslot only when at least $K$ energy units is available in the energy buffer. Otherwise, the number of energy units does not change. Likewise, define $\mathbf{T}$ to be the transition matrix of the energy buffer state in the first subslot when node R is in the EH mode. $\mathbf{T}_{ij}$ is derived as
\begin{equation}
\mathbf{T}_{ij}=\begin{cases}
0 \quad \quad \quad\quad ~~ ; \quad\quad ~  j<i ,\\
\gamma_{j-i} \quad\quad\quad ~~ ; \quad\quad  i \leq j < N ,  \\
\sum^{b_{\text{max}}}_{l=N-i}\gamma_{l} \quad ; \quad\quad  j = N , \\
\end{cases}
\label{eqT1}
\end{equation}
According to the above equation, the number of energy units does not decrease in the first subslot, since node R does not transmit. Also, due to limited capacity of $\text{Q}_\text{e}$, i.e., $N$, the energy buffer becomes full if more that $N-i$ energy units arrive when the energy state is $i$.

Let $\alpha_s$ denote the probability of switching to the DD mode at state $s$, where $s\in\{0,1,...\}\times\{0,1,...,N\}$. Then, the transition probability from state $s=(0,i)$ at the onset of a slot to state $(l,j)$ at the onset of the next slot, represented by $P_{0i\rightarrow lj}$, is written as
\begin{equation}
\begin{aligned}
P_{0i\rightarrow lj}=\begin{cases}
\alpha_s(1-p^{det}_{S})p^{det}_{R} \mathbf{M}_{ij}+\alpha_s p^{det}_{S} \mathbf{I}_{ij}+\\
(1-\alpha_s)\mathbf{T}_{ij} \quad  &;\quad  l=0~, \\
\alpha_s(1-p^{det}_{S})(1-p^{det}_{R}) \mathbf{M}_{ij}~~&; \quad   l=1~, \\
\end{cases}
\label{eqtp1app}
\end{aligned}
\end{equation}
where $\mathbf{I}_{ij}$ is the $ij$-th component of the identity matrix. Note that when node R switches to the DD mode, the energy state does not change in the first subslot. In the above equation, the first case states that node R remains empty in three conditions. First, it switches to the DD mode in the first subslot, receives a packet and transmits it successfully in the second subslot. In this condition, node R becomes backlogged at the end of the first subslot, and thus its energy state changes according to $\mathbf{M}$. Second, it switches to the DD mode but does not receive any packet. Therefore the energy state does not change. Third, it harvests energy in the first subslot. Moreover, the second case in \eqref{eqtp1app} indicates that the number of data packets increases by one if node R switches to the DD mode and receives a packet, but  does not transmit it successfully in the second subslot.    

Let $\mathbf{B}$ denote the transition matrix of the energy buffer state given that node R is backlogged at the onset of a slot and selects EH mode. Then, $\mathbf{B}=\mathbf{T} \times \mathbf{M}$. The transition probability from state $s=(l,i)$ to state $(l',j)$ ($l>0$), represented by $P_{li\rightarrow l'j}$  is written as 
\begin{equation}
\begin{aligned}
P_{li\rightarrow l'j}=\begin{cases}
\alpha_s((1-p^{det}_{S})p^{det}_{R}+p^{det}_{S}(1-p^{det}_{R}))\mathbf{M}_{ij}+\\(1-\alpha_s)(1-p^{det}_{R})\mathbf{B}_{ij} \qquad \qquad \quad ~;~ l'=l~, \\
\alpha_s(1-p^{det}_{S})(1-p^{det}_{R})\mathbf{M}_{ij} ~ \qquad \quad ; ~  l'=l+1, \\
\alpha_s p^{det}_{S}p^{det}_{R}\mathbf{M}_{ij}+(1-\alpha_s)p^{det}_{R}\mathbf{B}_{ij} ~ ; ~ l'=l-1. 
\end{cases}
\end{aligned}
\label{eqtp2app}
\end{equation}
In the above equation, the first case indicates that when $l>0$, the number of data packets does not change in three situations. First, the DD mode is selected, a packet is received in the first subslot, and a packet is transmitted successfully in the second subslot. Second, the DD mode is selected, but neither a packet is received in the first subslot nor a packet is transmitted successfully in the second subslot. Third, node R switches to EH mode and does not transmit a packet successfully in the second subslot. In this situation, the energy buffer state changes according to $\mathbf{B}$. Other cases are written in a similar way. 
 
\section{Proof of $\frac{d\overline \alpha}{d\beta}>0$ in Proposition \ref{pro:delay1}}
\label{app:two}
Consider the Markov chain corresponding to the policy $\hat\alpha_s(\beta)$ in \eqref{eqpolicy2}, in which $s=(l,i)\in \{0,1\} \times \{0,1,...,N\}$. The transition probabilities of the Markov chain are derived as in \eqref{eqtp1} and \eqref{eqtp2} by replacing $\alpha_s$ with $\hat\alpha_s(\beta)$. Let $\mathbf{F}$ denote the corresponding transition matrix. Note that  
the transition probabilities in $\mathbf F$ are all independent of $\beta$ except those in ${N+1}^{th}$ row, which correspond to transitions from state $s=(0,N)$ to other states. These transition probabilities  are linear functions of $\beta$. Thus $\mathbf{F}$ can be written as 
\begin{equation}
 \begin{aligned}
 \mathbf{F}=\mathbf{\hat{F}}+\beta \mathbf{\tilde{F}}
 \end{aligned}
 \label{eqf}
\end{equation}
where all entries in $\mathbf{\tilde{F}}$ are zero except those in ${N+1}^{th}$ row. Also, let $\boldsymbol{\nu}$ denote the eigenvector of $\mathbf{F}$ associated to eigenvalue $1$ that is normalized in such a way that the ${N+1}^{th}$ coordinate of $\boldsymbol{\nu}$ is equal to $1$, i.e., $\boldsymbol{\nu}_{N+1}=1$. Note that with this normalization, $\boldsymbol{\nu}$ is a scaled version of the probability vector. Then, by substituting $\mathbf{F}$ from \eqref{eqf} to $\boldsymbol{\nu}\mathbf{F}=\boldsymbol{\nu}$ and taking the first derivative, we have
\begin{equation}
\begin{aligned}
\frac{d\boldsymbol{\nu}}{d\beta}=\boldsymbol{\nu}\mathbf{\tilde{F}}+\frac{d\boldsymbol{\nu}}{d\beta}\mathbf{\hat{F}}+\beta \frac{d\boldsymbol{\nu}}{d\beta} \mathbf{\tilde{F}}.
\end{aligned}
\label{eqfirstder}
\end{equation}
Then by the normalization $\boldsymbol{\nu}_{N+1}=1$, we have $\frac{d\boldsymbol{\nu}_{N+1}}{d\beta}=0$. As a result, since all rows of $\tilde{\mathbf{F}}$ except row $N+1$ are zero, the third term in \eqref{eqfirstder} vanishes. 

Next taking another derivative we obtain
\begin{equation}
\begin{aligned}
\frac{d^2\boldsymbol{\nu}}{d\beta^2}=\frac{d\boldsymbol{\nu}}{d\beta}\mathbf{\tilde{F}}+\frac{d^2\boldsymbol{\nu}}{d\beta^2}\mathbf{\hat{F}} =  \frac{d^2\boldsymbol{\nu}}{d\beta^2}\mathbf{\hat{F}}.
\end{aligned}
\label{eqsecder}
\end{equation}
This means that $\frac{d^2\boldsymbol{\nu}}{d\beta^2}$ is either zero or an eigenvector of $\hat{\mathbf{F}}$ with eigenvalue $1$. In the latter case, $\frac{d^2\boldsymbol{\nu}}{d\beta^2}=c\boldsymbol{\hat{\pi}}$ where $\boldsymbol{\hat{\pi}}$ is the steady-state probability vector of $\mathbf{\hat{F}}$ and $c$ is a constant. Since $\frac{d^2\boldsymbol{\nu_{N+1}}}{d\beta^2}=0$, then $c=0$. As a result, $\frac{d^2\boldsymbol{\nu}}{d\beta^2}=0$ and thus, $\boldsymbol{\nu}=\beta\mathbf{a}+\mathbf{b}$ where $\mathbf{a}_{N+1}$ is zero since $\boldsymbol{\nu}_{N+1}$ is assumed to be fixed and independent of $\beta$, i.e., $\nu_{N+1}=1$. Also, let $\boldsymbol{\pi}$ to be the steady-state probability vector of $\mathbf{F}$, then, we have $\boldsymbol{\pi}=\frac{\boldsymbol{\nu}}{\boldsymbol{\nu^T}\mathbf{1}}$. Thus, $\boldsymbol{\pi}_{N+1}$ is given by
\begin{equation}
\begin{aligned}
\boldsymbol{\pi}_{N+1}=\frac{\mathbf{b}_{N+1}}{\beta\,\mathbf{a}^{T}\mathbf{1}+\mathbf{b}^{T}\mathbf{1}}.
\end{aligned}
\label{eqstate}
\end{equation} 
Now, the average DD probability corresponding to policy $\hat\alpha_s(\beta)$ in \eqref{eqpolicy2} is $\overline{\alpha}=\beta \boldsymbol{\pi}_{N+1}= \frac{\beta\,\mathbf{b}_{N+1}}{\beta\,\mathbf{a}^{T}\mathbf{1}+\mathbf{b}^{T}\mathbf{1}}$. Then $\overline{\alpha}$ is an increasing function of $\beta$ since $\mathbf{b}^T\mathbf{1}>0$ ($\mathbf{b}$ is a scaled version of $\hat{\pi}$ since it corresponds to $\beta=0$). 
\section{Proof of Covexity of $\tau(e_{th},\beta)$ versus $\beta$}
\label{app:convex}
If Conjecture \ref{conj:delaydy} holds, then $\tau(e_{th},\beta)$, i.e., the average delay corresponding to $\alpha^{D}_s$ in  \eqref{eqstruc}, is a convex function of $\beta$.
In order to prove the convexity, we should prove the following inequality
\begin{equation}
\begin{aligned}
\theta \tau(e_{th},\beta_1)+(1-\theta)\tau(e_{th},\beta_2)>\tau(e_{th},\theta \beta_1+(1-\theta)\beta_2)~ ; \quad   \\ \forall \theta \in [0,1],
\label{eqproofconv}
\end{aligned}
\end{equation}
where $\beta_1, \beta_2 \in [0,1]$.
The left-hand side of \eqref{eqproofconv} is the average transmission delay in the policy obtained by time sharing between two policies with the same structure as $\alpha_s^{D}$ in \eqref{eqstruc} but different DD probabilities at state $(0,e_{th})$, i.e., $\beta_1$ and $\beta_2$ , respectively. Also, observe that the average DD probability of the time-sharing policy at state $(0,e_{th})$ is equal to $\theta \beta_1+(1-\theta)\beta_2$. However, it is not necessarily stationary. Because of the unichain Markovian structure of our problem, a stationary policy exists with the same delay as the delay of time-sharing policy \cite{dl:puterman}. On the other hand, the right-hand side of \eqref{eqproofconv} is the minimum delay among all stationary policies with the average DD probability equal to $\theta \beta_1+(1-\theta)\beta_2$. Thus, \eqref{eqproofconv} holds. This completes the proof.

\setstretch{1}

\end{document}